\documentclass[sigconf]{acmart}

\copyrightyear{2017}
\acmYear{2017}
\setcopyright{acmcopyright}
\acmConference{CIKM'17 }{November 6--10, 2017}{Singapore,
Singapore}
\acmPrice{15.00}
\acmDOI{10.1145/3132847.3132966}
\acmISBN{978-1-4503-4918-5/17/11}

\fancyhead{}
\settopmatter{printacmref=false, printfolios=false}

\usepackage{graphicx}
\usepackage{subcaption}
\usepackage{balance}  
\usepackage{amsmath,epsfig}
\usepackage{amssymb}
\usepackage[noend]{algpseudocode}
\usepackage{algorithm}
\usepackage{epstopdf}
\usepackage{multi row}     
\usepackage[font={bf}, tableposition=bottom]{caption}     
\usepackage{microtype}    
\usepackage{units}     
\usepackage{xspace}     
\usepackage{booktabs}     
\usepackage{siunitx}          
\usepackage{xcolor}
\usepackage{tablefootnote}
\usepackage{float}
\usepackage{xfrac}
\usepackage{balance}  

\usepackage{tikz,pgfplots,pgfplotstable}
\usetikzlibrary{positioning,fit,shapes}
\usetikzlibrary{decorations.pathreplacing}
\usetikzlibrary{arrows}

\pgfdeclarelayer{background}
\pgfdeclarelayer{foreground}
\pgfsetlayers{background,main,foreground}

\makeatletter
\tikzset{multicircle/.style  args={#1, #2}{%
 alias=tmp@name, %
  postaction={%
    insert path={
     \pgfextra{%
     \pgfpointdiff{\pgfpointanchor{\pgf@node@name}{center}}%
                  {\pgfpointanchor{\pgf@node@name}{east}}%
     \pgfmathsetmacro\insiderad{\pgf@x}%
        \fill[white] (\pgf@node@name.center)  circle (\insiderad-\pgflinewidth);%
        \draw[#2] (\pgf@node@name.center)  circle (\insiderad-\pgflinewidth);%
        \fill[#2] (\pgf@node@name.center)  -- ++(0:\insiderad-\pgflinewidth) arc (0:#1:\insiderad-\pgflinewidth)--cycle;%
        }}}}}
\makeatother

\definecolor{yafaxiscolor}{rgb}{0.3, 0.3, 0.3}
\definecolor{yafcolor1}{rgb}{0.4, 0.165, 0.553}
\definecolor{yafcolor2}{rgb}{0.949, 0.482, 0.216}
\definecolor{yafcolor3}{rgb}{0.47, 0.549, 0.306}
\definecolor{yafcolor4}{rgb}{0.925, 0.165, 0.224}
\definecolor{yafcolor5}{rgb}{0.141, 0.345, 0.643}
\definecolor{yafcolor6}{rgb}{0.965, 0.933, 0.267}
\definecolor{yafcolor7}{rgb}{0.627, 0.118, 0.165}
\definecolor{yafcolor8}{rgb}{0.878, 0.475, 0.686}
\definecolor{yafcolor9}{rgb}{0.965, 0.733, 0.767}

\newlength{\yafaxispad}
\setlength{\yafaxispad}{-4pt}
\newlength{\yaftlpad}
\setlength{\yaftlpad}{\yafaxispad}
\addtolength{\yaftlpad}{-0pt}
\newlength{\yaflabelpad}
\setlength{\yaflabelpad}{-2pt}
\newlength{\yafaxiswidth}
\setlength{\yafaxiswidth}{1.2pt}
\newlength{\yafticklen}
\setlength{\yafticklen}{2pt}

\makeatletter
\def\pgfplots@drawtickgridlines@INSTALLCLIP@onorientedsurf#1{}
\makeatother

\newcommand{\yafdrawxaxis}[2]{
  \pgfplotstransformcoordinatex{#1}\let\xmincoord=\pgfmathresult 
  \pgfplotstransformcoordinatex{#2}\let\xmaxcoord=\pgfmathresult 
  \pgfsetlinewidth{\yafaxiswidth} 
  \pgfsetcolor{yafaxiscolor}
  \pgfpathmoveto{\pgfpointadd{\pgfpointadd{\pgfplotspointrelaxisxy{0}{0}}{\pgfqpointxy{\xmincoord}{0}}}{\pgfqpoint{-0.5\yafaxiswidth}{\yafaxispad}}}
  \pgfpathlineto{\pgfpointadd{\pgfpointadd{\pgfplotspointrelaxisxy{0}{0}}{\pgfqpointxy{\xmaxcoord}{0}}}{\pgfqpoint{0.5\yafaxiswidth}{\yafaxispad}}}
  \pgfusepath{stroke}

}
\newcommand{\yafdrawyaxis}[2]{
  \pgfplotstransformcoordinatey{#1}\let\ymincoord=\pgfmathresult 
  \pgfplotstransformcoordinatey{#2}\let\ymaxcoord=\pgfmathresult 
  \pgfsetlinewidth{\yafaxiswidth} 
  \pgfsetcolor{yafaxiscolor}
  \pgfpathmoveto{\pgfpointadd{\pgfpointadd{\pgfplotspointrelaxisxy{0}{0}}{\pgfqpointxy{0}{\ymincoord}}}{\pgfqpoint{\yafaxispad}{-0.5\yafaxiswidth}}}
  \pgfpathlineto{\pgfpointadd{\pgfpointadd{\pgfplotspointrelaxisxy{0}{0}}{\pgfqpointxy{0}{\ymaxcoord}}}{\pgfqpoint{\yafaxispad}{0.5\yafaxiswidth}}}
  \pgfusepath{stroke}
}

\pgfplotscreateplotcyclelist{yaf}{%
{yafcolor1,mark options={scale=0.75},mark=o}, 
{yafcolor2,mark options={scale=0.75},mark=square},
{yafcolor3,mark options={scale=0.75},mark=triangle},
{yafcolor4,mark options={scale=0.75},mark=o},
{yafcolor5,mark options={scale=0.75},mark=o},
{yafcolor6,mark options={scale=0.75},mark=o},
{yafcolor7,mark options={scale=0.75},mark=o},
{yafcolor8,mark options={scale=0.75},mark=o}} 

\pgfplotsset{axis y line=left, axis x line=bottom,
  tick align=outside,
  compat = 1.3,
  tickwidth=\yafticklen,
  clip = false,
  every axis title shift = 0pt,
    x axis line style= {-, line width = 0pt, opacity = 0},
    y axis line style= {-, line width = 0pt, opacity = 0},
    x tick style= {line width = \yafaxiswidth, color=yafaxiscolor, yshift = \yafaxispad},
    y tick style= {line width = \yafaxiswidth, color=yafaxiscolor, xshift = \yafaxispad},
    x tick label style = {font=\scriptsize, yshift = \yaftlpad},
    y tick label style = {font=\scriptsize, xshift = \yaftlpad},
    every axis y label/.style = {at = {(ticklabel cs:0.5)}, rotate=90, anchor=center, font=\scriptsize, yshift = -\yaflabelpad},
    every axis x label/.style = {at = {(ticklabel cs:0.5)}, anchor=center, font=\scriptsize, yshift = \yaflabelpad},
    x tick label style = {font=\scriptsize, yshift = 1pt},
    grid = major,
    major grid style  = {dash pattern = on 1pt off 3 pt},
  every axis plot post/.append style= {line width=\yafaxiswidth} ,
  legend cell align = left,
  legend style = {inner sep = 1pt, cells = {font=\scriptsize}},
  legend image code/.code={%
    \draw[mark repeat=2,mark phase=2,#1] 
    plot coordinates { (0cm,0cm) (0.15cm,0cm) (0.3cm,0cm) };%
  } 
}

\pagenumbering{gobble}

\newcommand{\spara}[1]{\noindent{\bf #1}}

\newtheorem{problem}[theorem]{Problem}

\newcommand{\graph}{\ensuremath{G}\xspace}
\newcommand{\dgraph}[2]{\ensuremath{G=(#1,#2)}\xspace}
\newcommand{\vertices}{\ensuremath{V}}
\newcommand{\edges}{\ensuremath{E}\xspace}
\newcommand{\subgraph}{\ensuremath{G(S)}\xspace}
\newcommand{\subsetv}{\ensuremath{S}\xspace}
\newcommand{\subsete}[1]{\ensuremath{E(#1)}\xspace}
\newcommand{\numnodes}{\ensuremath{n}\xspace}
\newcommand{\numedges}{\ensuremath{m}\xspace}
\newcommand{\density}[1]{\ensuremath{\rho_{#1}}\xspace}
\newcommand{\estimatedensity}{\ensuremath{\tilde{\rho}_{\densest}}\xspace}

\newcommand{\densest}{\ensuremath{{S}^*}\xspace}
\newcommand{\neighbors}[1]{\ensuremath{N(#1)}\xspace}
\newcommand{\degree}[1]{\ensuremath{d(#1)}\xspace}
\newcommand{\idegree}[2]{\ensuremath{d_{#2}(#1)}\xspace}
\newcommand{\induced}[1]{\ensuremath{\mathcal{I}(#1)}\xspace}
\newcommand{\coreset}{\ensuremath{C}\xspace}

\newcommand{\core}[1]{\ensuremath{\kappa(#1)}\xspace}
\newcommand{\intcore}[2]{\ensuremath{\kappa_{#2}(#1)}\xspace}
\newcommand{\mcore}[1]{\ensuremath{C_{\ell}(#1)}\xspace}
\newcommand{\mcoreg}[1]{\ensuremath{C_{#1}}\xspace}

\newcommand{\streamedge}[1]{\ensuremath{e_{#1}}\xspace}
\newcommand{\streamgraph}[2]{\ensuremath{G_{{#1}}(#2)}\xspace}
\newcommand{\slidingwindow}[2]{\ensuremath{W_{{#1}}(#2)}\xspace}
\newcommand{\slidingvertices}[2]{\ensuremath{V_{{#1}}(#2)}\xspace}
\newcommand{\snowball}{\ensuremath{D}\xspace}
\newcommand{\isnowball}[1]{\ensuremath{D_{#1}}\xspace}
\newcommand{\bag}{\ensuremath{B}\xspace}
\newcommand{\topk}[1]{\ensuremath{\rho_k(#1)}\xspace}
\newcommand{\algo}{\ensuremath{A}\xspace}
\newcommand{\kdensest}{\ensuremath{k}\xspace}
\newcommand{\windowsize}{\ensuremath{x}\xspace}
\newcommand{\collectionsubsetv}{\ensuremath{\mathcal{S}}\xspace} 
\newcommand{\np}{\ensuremath{\mathbf{NP}}\xspace} 
\newcommand{\nphard}{{\np}-hard\xspace} 
\newcommand{\bigO}{\ensuremath{\mathcal{O}}\xspace} 
\newcommand{\polylog}{\ensuremath{\mathrm{polylog}}\xspace} 
\newcommand{\mapreduce}{{Map\-Reduce}\xspace} 


\newcommand{\charikar}{\ensuremath{CH}\xspace}
\newcommand{\batagelj}{\ensuremath{VB}\xspace}
\newcommand{\bahmani}[1]{\ensuremath{BB_{#1}}\xspace}
\newcommand{\li}{\ensuremath{RL}\xspace}
\newcommand{\sariyuce}{\ensuremath{TR}\xspace}
\newcommand{\epasto}[1]{\ensuremath{AE_{#1}}\xspace}
\newcommand{\our}{\ensuremath{GR}\xspace}

\algdef{SE}[DOWHILE]{Do}{doWhile}{\algorithmicdo}[1]{\algorithmicwhile\ #1}%

\usepackage{environ}
\NewEnviron{myequation}{%
    \begin{equation}
    \scalebox{1}{$\BODY$}
    \end{equation}
    }


\newenvironment {squishlist}
{\begin{list}{$\bullet$}
  { \setlength{\itemsep}{1pt}
     \setlength{\parsep}{1pt}
     \setlength{\topsep}{1pt}
     \setlength{\partopsep}{1pt}
     \setlength{\leftmargin}{1.5em}
     \setlength{\labelwidth}{1em}
     \setlength{\labelsep}{0.5em} } }
{\end{list}}

\errorcontextlines\maxdimen

\makeatletter
    \newcommand*{\algrule}[1][\algorithmicindent]{\makebox[#1][l]{\hspace*{.5em}\thealgruleextra\vrule height \thealgruleheight depth \thealgruledepth}}%
\newcommand*{\thealgruleextra}{}
\newcommand*{\thealgruleheight}{.75\baselineskip}
\newcommand*{\thealgruledepth}{.25\baselineskip}

\newcount\ALG@printindent@tempcnta
\def\ALG@printindent{%
    \ifnum \theALG@nested>0
        \ifx\ALG@text\ALG@x@notext
        \else
            \unskip
            \addvspace{-1pt}
            \ALG@printindent@tempcnta=1
            \loop
                \algrule[\csname ALG@ind@\the\ALG@printindent@tempcnta\endcsname]%
                \advance \ALG@printindent@tempcnta 1
            \ifnum \ALG@printindent@tempcnta<\numexpr\theALG@nested+1\relax
            \repeat
        \fi
    \fi
    }%
\usepackage{etoolbox}
\patchcmd{\ALG@doentity}{\noindent\hskip\ALG@tlm}{\ALG@printindent}{}{\errmessage{failed to patch}}
\makeatother

\newbox\statebox
\newcommand{\myState}[1]{%
    \setbox\statebox=\vbox{#1}%
    \edef\thealgruleheight{\dimexpr \the\ht\statebox+1pt\relax}%
    \edef\thealgruledepth{\dimexpr \the\dp\statebox+1pt\relax}%
    \ifdim\thealgruleheight<.75\baselineskip
        \def\thealgruleheight{\dimexpr .75\baselineskip+1pt\relax}%
    \fi
    \ifdim\thealgruledepth<.25\baselineskip
        \def\thealgruledepth{\dimexpr .25\baselineskip+1pt\relax}%
    \fi
    \State #1%
    \def\thealgruleheight{\dimexpr .75\baselineskip+1pt\relax}%
    \def\thealgruledepth{\dimexpr .25\baselineskip+1pt\relax}%
}

\DeclareGraphicsExtensions{.pdf,.png,.jpg,.eps}
\graphicspath{{./img/}}







\begin{document}


\title{Fully Dynamic Algorithm for Top-\kdensest Densest Subgraphs}

%
\author{
Muhammad Anis Uddin Nasir{\small $^{\sharp 1}$}, 
Aristides Gionis{\small $^{\ddagger 2}$}, 
Gianmarco De Francisci Morales{\small $^{\diamond 3}$} \\
Sarunas Girdzijauskas{\small $^{\sharp 4}$}
}
\affiliation{%
  \institution{
$^{\sharp}$Royal Institute of Technology, Sweden\hspace{3mm}
$^{\ddagger}$Aalto University, Finland\hspace{3mm}
$^{\diamond}$Qatar Computing Research Institute, Qatar 
  }
  \vspace{-3mm}
\small{$^{1}$}anisu@kth.se,
\small{$^{2}$}aristides.gionis@aalto.fi,
\small{$^{3}$}gdfm@acm.org,
\small{$^{4}$}sarunasg@kth.se
}


\begin{abstract}
Given a large graph,
the densest-subgraph problem asks to find a subgraph 
with maximum average degree. 
When considering the top-$k$ version of this problem, 
a na\"ive solution is to 
iteratively find the densest subgraph
and remove it in each iteration.  
However, such a solution is impractical due to high processing cost.
The problem is further complicated when dealing with dynamic graphs, 
since adding or removing an edge requires re-running the algorithm.
In this paper, we study the top-$k$ densest-subgraph problem in the sliding-window model and propose an efficient fully-dynamic algorithm.
The input of our algorithm consists of an edge stream, 
and the goal is to find the node-disjoint subgraphs 
that maximize the sum of their densities.
In contrast to existing state-of-the-art solutions 
that require iterating over the entire graph upon any update, 
our algorithm profits from the observation that updates 
only affect a limited region of the graph.
Therefore, the top-$k$ densest subgraphs are maintained by only applying local updates.
We provide a theoretical analysis of the proposed algorithm and show empirically that the algorithm often generates denser subgraphs than state-of-the-art competitors. 
Experiments show an improvement in efficiency of up to five orders of magnitude 
compared to state-of-the-art solutions. 
\end{abstract}

\maketitle
\section{Introduction}

Finding a subgraph with maximal density in a given graph is a fundamental 
graph-mining problem, known as the \emph{densest-subgraph} problem.
Density is commonly defined as the ratio between number of edges and vertices, 
while many other definitions of density have been used in the literature~\cite{batagelj2003,sozio2010community,tatti2015density,tsourakakis2013denser}.
The densest-subgraph problem has many applications, for example, in 
community detetion~\cite{chen2012dense,dourisboure2007extraction}, 
event detection~\cite{angel2012dense}, 
link-spam detection \cite{gibson2005discovering}, and
distance query indexing \cite{akiba2013fast}.

In applications, we are often interested not only in one densest subgraph,
but in the \emph{top-$k$}.
The top-$k$ densest subgraphs 
can be vertex-disjoint, edge-disjoint, or 
overlapping \cite{balalau2015finding, galbrun2016top}.
Different objective functions and constraints give rise to different problem formulations~\cite{galbrun2016top,valari2012discovery, balalau2015finding}.
In this work, we choose to maximize the sum of the densities of the $k$ subgraphs in the solution.
In addition, we seek a solution with disjoint subgraphs.
This version of the problem is known to be NP-hard~\cite{balalau2015finding}.

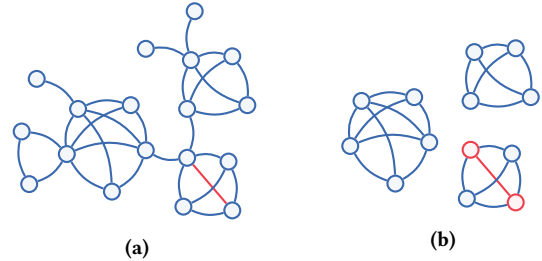
\begin{figure}[t]
\begin{subfigure}{0.48\columnwidth}
\centering 
  \begin{tikzpicture} 

\tikzstyle{vertex} = [font=\small, inner sep = 1pt,  minimum width=6pt, 
	thick, circle, text=black!90, draw=yafcolor5!90, fill=yafcolor5!05]
\tikzstyle{line} = [thick, yafcolor5!90]

\tikzset{
  text style/.style={text=black!70, font=\footnotesize}
}

\node[vertex] (v1) at (1.1,1) {};
\node[vertex] (v2) at (1.55,1.55) {};
\node[vertex] (v3) at (1.35,2.15) {};
\node[vertex] (v4) at (0.65,2.1) {};
\node[vertex] (v5) at (0.5,1.5) {};

\node[vertex] (v6) at (0.0,1.1) {};
\node[vertex] (v7) at (-0.1,1.8) {};
\node[vertex] (v8) at (0.1,2.5) {};

\node[vertex] (v9) at (2.7,0.75) {};
\node[vertex] (v10) at (2.65,1.4) {};
\node[vertex] (v11) at (2.1,1.45) {};
\node[vertex] (v12) at (2.05,0.8) {};

\node[vertex] (v13) at (2.9,2.15) {};
\node[vertex] (v14) at (2.7,2.8) {};
\node[vertex] (v15) at (2.15,2.75) {};
\node[vertex] (v16) at (2.1,2.1) {};

\node[vertex] (v17) at (1.55,2.9) {};
\node[vertex] (v18) at (2.2,3.4) {};

\draw (v1) edge [line, -, bend right = 20] (v2);
\draw (v1) edge [line, -, bend right = 20] (v4);
\draw (v1) edge [line, -, bend left = 20] (v5);
\draw (v2) edge [line, -, bend right = 20] (v3);
\draw (v2) edge [line, -, bend right = 20] (v4);
\draw (v2) edge [line, -, bend right = 20] (v5);
\draw (v3) edge [line, -, bend right = 20] (v4);
\draw (v3) edge [line, -, bend right = 20] (v5);
\draw (v4) edge [line, -, bend right = 20] (v5);

\draw (v5) edge [line, -, bend left = 20] (v6);
\draw (v5) edge [line, -, bend right = 20] (v7);
\draw (v6) edge [line, -, bend left = 20] (v7);

\draw (v4) edge [line, -, bend right = 20] (v8);

\draw (v9) edge [line, -, bend right = 20] (v10);
\draw (v9) edge [line, yafcolor4!90, -, bend left = 0] (v11);
\draw (v9) edge [line, -, bend left = 20] (v12);
\draw (v10) edge [line, -, bend right = 20] (v11);
\draw (v10) edge [line, -, bend left = 20] (v12);
\draw (v11) edge [line, -, bend right = 20] (v12);

\draw (v2) edge [line, -, bend right = 20] (v11);
\draw (v11) edge [line, -, bend right = 20] (v16);

\draw (v13) edge [line, -, bend right = 20] (v14);
\draw (v13) edge [line, -, bend left = 20] (v15);
\draw (v13) edge [line, -, bend left = 20] (v16);
\draw (v14) edge [line, -, bend right = 20] (v15);
\draw (v14) edge [line, -, bend right = 20] (v16);
\draw (v15) edge [line, -, bend right = 20] (v16);

\draw (v15) edge [line, -, bend right = 20] (v17);
\draw (v15) edge [line, -, bend left = 20] (v18);

\end{tikzpicture}
  \caption{}
  \label{fig:sample-graph}
\end{subfigure}%
\begin{subfigure}{0.48\columnwidth}
  \centering
  \begin{tikzpicture} 

\tikzstyle{vertex} = [font=\small, inner sep = 1pt,  minimum width=6pt, 
	thick, circle, text=black!90, draw=yafcolor5!90, fill=yafcolor5!05]
\tikzstyle{line} = [thick, yafcolor5!90]

\tikzset{
  text style/.style={text=black!70, font=\footnotesize}
}

\node[vertex] (v1) at (1.1,1) {};
\node[vertex] (v2) at (1.55,1.55) {};
\node[vertex] (v3) at (1.35,2.15) {};
\node[vertex] (v4) at (0.65,2.1) {};
\node[vertex] (v5) at (0.5,1.5) {};


\node[vertex, draw=yafcolor4!90, fill=yafcolor4!05] (v9) at (2.7,0.75) {};
\node[vertex] (v10) at (2.65,1.4) {};
\node[vertex, draw=yafcolor4!90, fill=yafcolor4!05] (v11) at (2.1,1.45) {};
\node[vertex] (v12) at (2.05,0.8) {};

\node[vertex] (v13) at (2.9,2.15) {};
\node[vertex] (v14) at (2.7,2.8) {};
\node[vertex] (v15) at (2.15,2.75) {};
\node[vertex] (v16) at (2.1,2.1) {};


\draw (v1) edge [line, -, bend right = 20] (v2);
\draw (v1) edge [line, -, bend right = 20] (v4);
\draw (v1) edge [line, -, bend left = 20] (v5);
\draw (v2) edge [line, -, bend right = 20] (v3);
\draw (v2) edge [line, -, bend right = 20] (v4);
\draw (v2) edge [line, -, bend right = 20] (v5);
\draw (v3) edge [line, -, bend right = 20] (v4);
\draw (v3) edge [line, -, bend right = 20] (v5);
\draw (v4) edge [line, -, bend right = 20] (v5);



\draw (v9) edge [line, -, bend right = 20] (v10);
\draw (v9) edge [line, yafcolor4!90, -, bend left = 0] (v11);
\draw (v9) edge [line, -, bend left = 20] (v12);
\draw (v10) edge [line, -, bend right = 20] (v11);
\draw (v10) edge [line, -, bend left = 20] (v12);
\draw (v11) edge [line, -, bend right = 20] (v12);


\draw (v13) edge [line, -, bend right = 20] (v14);
\draw (v13) edge [line, -, bend left = 20] (v15);
\draw (v13) edge [line, -, bend left = 20] (v16);
\draw (v14) edge [line, -, bend right = 20] (v15);
\draw (v14) edge [line, -, bend right = 20] (v16);
\draw (v15) edge [line, -, bend right = 20] (v16);


\end{tikzpicture}
  \caption{}
  \label{fig:sample-bag}
  	\vspace{-\baselineskip}
\end{subfigure}


\caption{
For the graph in Figure \ref{fig:sample-graph}, 
we are interested in extracting the top-3 densest subgraphs.
Consider the arrival of an edge shown in red. 
Figure \ref{fig:sample-bag} shows the top-3 densest subgraphs 
after the arrival.
The objective is to design an algorithm that can efficiently
maintain the densest subgraphs while keeping the number of updates very low, 
in this case updating only the vertices in red.}
\label{fig:sample-figure}
\vspace{-\baselineskip}
\end{figure}

To complicate the matter, most real-world graphs are dynamic and rapidly changing.
For instance, Facebook users are continuously creating new connections and removing old ones, thus changing the network structure.
Twitter users produce posts at a high rate, which makes old posts less relevant.
Given the dynamic nature of many graphs, here we focus on a sliding-window model which gives more importance to recent events~\cite{babcock2002models,crouch2013dynamic,datar2002maintaining}.
Finding the top-$k$ densest subgraphs in a sliding window is of interest to several real-time applications, e.g.,  
community tracking~\cite{yang2015defining}, 
event detection~\cite{rozenshtein2014event}, 
story identification~\cite{angel2012dense},
fraud detection~\cite{beutel2013copycatch}, and more.
%
%
%
We assume the input to the system arrives as an \emph{edge stream}, 
and seek to extract the $k$ vertex-disjoint subgraphs that maximize the sum of densities~\cite{balalau2015finding}.

A na\"{i}ve solution involves executing a static algorithm for the densest-subgraph problem $k$ times, while removing the densest subgraph in each iteration.  
However, such a solution is impractical as it requires to execute the algorithm $k$ times for each update.
An alternative solution to our problem is to use a dynamic densest subgraph algorithm in a pipeline manner, 
where the output of an algorithm instance serves as input to the following one.
In this case, the graph and the instances of the algorithm are replicated independently across 
$k$ instances of the algorithm, resulting in a high memory and processing cost.

In this paper, we propose a fully-dynamic algorithm that finds an approximate solution.
The proposed algorithm follows a greedy approach and updates the densities of the subgraphs connected to vertices affected by edge operations (addition and removal).
The algorithm is efficiently designed based on key properties of densest subgraphs, 
and it is competitive against other recent dynamic algorithms~\cite{bhattacharya2015space, epasto2015efficient, mcgregor2015densest}.

First, our algorithm relies on the observation 
that only high-degree vertices are relevant for the solution.
As many natural graphs have a heavy-tailed degree distribution, 
the number of high-degree vertices in a graph is relatively smaller 
than the number of low-degree ones.
This simple observation enables pruning a major portion of the input 
stream on-the-fly.
Second, the vertices that are part of a densest subgraph are connected strongly to each other and weakly to other parts of the graph.
This enables independently maintaining and locally updating multiple subgraphs. 
Figure~\ref{fig:sample-figure} provides an example which demonstrates this intuition.

The algorithm tracks multiple subgraphs on-the-fly with the help of a newly defined data structure called \emph{snowball}.
These subgraphs are stored in a bag, from which the $k$ subgraphs with maximum densities are extracted.
The algorithm runs in-place, and does not require multiple copies of the graph, thus making it memory-efficient.
The one-pass nature of the algorithm allows extracting top-$k$ densest subgraphs for larger values of $k$.

We provide a theoretical analysis of the proposed algorithm, and show that the algorithm guarantees $2$-approximation for the first densest subgraph ($k=1$) while providing a high-quality heuristic for $k > 1$ compared to other solutions.
Experimental evaluation shows that our algorithm often generates denser subgraphs compared to the state-of-the-art algorithms, due to the fact that it maintains disconnected subgraphs separately.
In addition, the algorithm provides improvement in runtime up to three to five orders of magnitude compared to the state-of-the-art.
In summary, we make the following contributions:

\begin{squishlist}
\item We study the top-$k$ densest vertex-disjoint subgraphs problem in the sliding-window model.
\item We provide a brief survey on adapting several algorithms for densest subgraph problem for the top-$k$ case.
\item We propose a scalable fully-dynamic algorithm for the problem, and provide a detailed analysis of it.
\item The algorithm is open source and available online, together with the implementations of all the baselines.\footnote{\url{https://github.com/anisnasir/TopKDensestSubgraph}}
\item We report a comprehensive empirical evaluation of the algorithm in which it significantly outperforms previous state-of-the-art solutions by several orders of magnitude, while producing comparable or better quality solutions.
\end{squishlist}

\section{Preliminaries}
\label{sec:prel}

In this section, we present our notation, 
revisit basic definitions,
and formulate the top-$k$ densest subgraphs problem.

Consider an undirected graph $\dgraph{\vertices}{\edges}$
with \numnodes $=|\vertices|$ vertices and \numedges$=  |  \edges | $ edges.
The {\em neighborhood} of $v\in \vertices$  is defined as 
$\neighbors{v} =\{u \mid (v,u)\in \edges\}$, 
and its {\em degree} as $ \degree{v} =| \neighbors{v} |$.
For a subset $\subsetv\subseteq \vertices$
we define $\subsete{\subsetv}$ to be the set of edges
whose both endpoints are in~\subsetv, 
and $\subgraph=(\subsetv,\subsete{\subsetv})$ the subgraph {\em induced} by $\subsetv$.
The {\em internal degree} of a vertex $v$ 
with respect to $\subsetv\subseteq\vertices$ 
is defined by $\idegree{v}{\subsetv} = |\neighbors{v} \cap \subsetv|$.
Finally, for a subset of vertices $\subsetv \subseteq \vertices$
we define its {\em density} $\density{\subsetv}$  by
\begin{equation}
\label{equation:density}
\density{\subsetv}=\dfrac{|\subsete{\subsetv}|}{ | \subsetv |}.
\end{equation}
Note that the density of any subgraph is equal to half of its \emph{average internal degree}. 


\begin{definition} [Densest subgraph]
\label{definition:densest}
Given an undirect\-ed graph $\dgraph{\vertices}{\edges}$,
the densest subgraph \densest is a set of vertices that
maximizes the density function, i.e., 
\begin{equation}
\label{equation:densest}
\densest = \arg\max_{\subsetv\subseteq\vertices} \density{\subsetv}.
\end{equation}
\end{definition}

We say that an algorithm {\algo} computes an $\alpha$-{\em approximation} of the densest subgraph
if {\algo} computes a subset $\subsetv \subseteq \vertices$ 
such that $\density{\subsetv} \geq  \frac{1}{\alpha} \density{\densest}$, 
where $\densest \subseteq \vertices$ is the densest subgraph of \graph.


Next we introduce other concepts related to densest subgraph: 
{\em graph core}, 
{\em core decomposition}, and 
{\em induced core subgraph} of a vertex.


\begin{definition} [$j$-core]
\label{definition:k-core}
Given an undirected graph $\dgraph{\vertices}{\edges}$
and an integer $j$, 
a $j$-core of {\graph} is 
a subset of vertices $\coreset\subseteq\vertices$
so that each vertex $v\in\coreset$ has internal degree $\idegree{v}{\coreset}\ge j$, 
and \coreset is maximal with respect to this property.
\end{definition}

\begin{definition} [Core decomposition]
\label{definition:k-core-decomposition}
A core decomposition of a graph $\dgraph{\vertices}{\edges}$
is a nested sequence $\big\{\coreset_i\big\}$ of cores 
\begin{equation}
\label{equation:k-core-decomposition}
\vertices = \coreset_{0} \supseteq \coreset_{1} \supseteq \ldots 
\supseteq \coreset_{\ell} \supseteq \varnothing,  
\end{equation}
where each $\coreset_{i}$ is a $j$-core for some $j$.
\end{definition}

\begin{definition} [Core number]
\label{definition:core-number}
Given a core decomposition 
$\vertices = \coreset_{0} \supseteq \coreset_{1} \supseteq \ldots \supseteq \coreset_{\ell} \supseteq \varnothing$
of a graph \dgraph{\vertices}{\edges},
the core number $\core{v}$ of a vertex $v$
is the largest $j$ such that $v\in\coreset$ and {\coreset} is a $j$-core.
By overwriting notation, 
the core number $\core{\coreset}$ of a core $\coreset$ 
is the largest $j$ for which $\coreset$ is a $j$-core.
\end{definition}

Additionally, we use $\intcore{v}{\subsetv}$
to denote the core number of a vertex~$v$ in the subgraph induced by {\subsetv}.
The {largest core} (or {\em main core}) 
of a subgraph of $\graph(\subsetv)=(\subsetv,\edges(\subsetv))$ 
is denoted by $\mcore{\subsetv}$, 
while the main core of \graph
is simply denoted by $\mcoreg{\ell}$.

Note that the density of a $j$-core is at least~$\sfrac{j}{2}$,
as each vertex in the core has degree at least $j$ and each edge is counted twice.
This observation implies that the main core of a graph 
is a 2-approximation of its densest subgraph. 

\begin{lemma}
\label{lema:2-approximation}
Consider the core decomposition of a graph \graph, i.e., 
$\mcoreg{1} \subseteq \mcoreg{2} \subseteq\ldots \subseteq \mcoreg{\ell}$. 
The maximum core $\mcoreg{\ell}$
is 2-approximation of the densest subgraph of \graph~\cite{kortsarz1994generating}.
\end{lemma}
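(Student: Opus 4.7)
The plan is to establish the lemma through two complementary bounds: a lower bound on $\density{\mcoreg{\ell}}$ in terms of $\ell$, and an upper bound on $\density{\densest}$ also in terms of $\ell$, and then combine them.

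First I would prove the easy direction, namely that $\density{\mcoreg{\ell}} \ge \ell/2$. By definition of an $\ell$-core, every $v \in \mcoreg{\ell}$ satisfies $\idegree{v}{\mcoreg{\ell}} \ge \ell$. Summing this over $v \in \mcoreg{\ell}$ and using that each edge of the induced subgraph is counted twice gives $2|\subsete{\mcoreg{\ell}}| \ge \ell \cdot |\mcoreg{\ell}|$, which rearranges to $\density{\mcoreg{\ell}} \ge \ell/2$. This step is purely a direct unpacking of Definition~\ref{definition:k-core}.

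Next I would bound $\density{\densest}$ from above by $\ell$. The key structural fact is that in the densest subgraph, every vertex has internal degree at least $\density{\densest}$. To see this, suppose for contradiction that some $v \in \densest$ has $\idegree{v}{\densest} < \density{\densest}$; then removing $v$ would delete $\idegree{v}{\densest}$ edges and one vertex, yielding a subgraph of density $(|\subsete{\densest}|-\idegree{v}{\densest})/(|\densest|-1)$, which one can verify is strictly larger than $\density{\densest}$, contradicting optimality. Since degrees are integers, this implies $\idegree{v}{\densest} \ge \lceil \density{\densest} \rceil$ for every $v \in \densest$. Therefore $\densest$ is contained in the $\lceil \density{\densest} \rceil$-core of $\graph$, which forces the maximum core number to satisfy $\ell \ge \lceil \density{\densest} \rceil \ge \density{\densest}$.

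Combining the two inequalities yields
\begin{equation*}
\density{\mcoreg{\ell}} \;\ge\; \frac{\ell}{2} \;\ge\; \frac{\density{\densest}}{2},
\end{equation*}
which is precisely the $2$-approximation claim. The only delicate point in the argument is the second step, specifically verifying algebraically that pruning a vertex of below-average internal degree strictly increases density; this is the ``pivot'' that links the density of $\densest$ to a minimum-degree property and therefore to the core decomposition. Everything else is a matter of unpacking definitions.
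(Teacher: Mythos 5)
Your proof is correct and follows essentially the same route as the paper: you lower-bound the density of the maximum core by half its core number, and you use the vertex-removal argument to show every vertex of $\densest$ has internal degree at least $\density{\densest}$, hence $\densest$ lies in a core of order at least $\density{\densest}$ and $\ell \ge \density{\densest}$. The only differences are cosmetic — you argue directly where the paper argues by contradiction, and you add the (harmless, slightly sharpening) integrality rounding $\lceil \density{\densest}\rceil$.
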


\begin{proof}
Let \densest be the densest subgraph of {\graph} having density~$\density{\densest}$.
Every vertex in $\graph(\densest)$ has degree at least $\density{\densest}$;
otherwise a vertex with degree smaller than $\density{\densest}$ can be removed to obtain 
an even denser subgraph. 
Thus, \densest is a $\density{\densest}$-core.
Given the core decomposition of \graph, 
we know that $\density{\mcoreg{\ell}} \ge \frac{1}{2}\core{\mcoreg{\ell}}$.
We want to show that $\density{\mcoreg{\ell}}\ge \frac{1}{2}\density{\densest}$.
Assume otherwise, i.e., $\density{\mcoreg{\ell}}<\frac{1}{2}\density{\densest}$.
Then $\core{\mcoreg{\ell}}<\density{\densest}$.
It follows that \densest is a higher-order core than $\mcoreg{\ell}$, 
a contradiction.
\end{proof}


The concept of a core subgraph $\induced{v}$ 
induced by a vertex~$v$ \cite{li2014efficient,sariyuce2013streaming}
is also pertinent to our analysis. 
\begin{definition} [Induced core subgraph]
\label{definition:induced-core}
Given a graph $G=(V,E)$ and a vertex 
$v\in \vertices$, the induced core subgraph of vertex $v$, 
denoted by $\induced{v}$, 
is a maximal connected subgraph containing the vertex $v$ 
s.t. the core number of all the vertices in $\induced{v}$ 
is equal to $\core{v}$.
\end{definition}
In other words, the induced subgraph contains all vertices that are reachable from $v$ 
and have the same core number $\core{v}$.

All previous definitions apply to static graphs. 
Let us now focus on dynamic graphs.
In particular, we consider processing a graph in the 
\emph{sliding window edge-stream model}~\cite{datar2002maintaining}.
According to this model, 
the input to our problem is a stream of edges. 
The edge $\streamedge{i}$ is the $i$-th element of the stream. 
Equivalently, we say that edge $\streamedge{i}$ has timestamp~$i$.
A sliding window $\slidingwindow{t}{\windowsize}$, defined at time $t$ and of size {\windowsize}, 
is the set of all edges that arrive
between $\streamedge{t-\windowsize+1}$ and $\streamedge{t}$,
\begin{myequation}
\slidingwindow{t}{\windowsize} = \{ \streamedge{i}, i \in [t-\windowsize+1,t]\}.
\end{myequation}
For each edge $\streamedge{i} = (u, v)$,
we consider that $u$ and $v$ appear at time $i$, 
and we use $\slidingvertices{t}{\windowsize}$ to denote the set of vertices that appear 
in a length-{\windowsize} sliding window at time $t$.
The graph in a length-{\windowsize} sliding window at time $t$ is then defined to be 
$\streamgraph{t}{\windowsize}=(\slidingvertices{t}{\windowsize}, \slidingwindow{t}{\windowsize})$.

We are now ready to formally define the problem that we consider in this paper, i.e., 
finding the top-$k$ densest subgraphs in sliding window.
We first define the problem in a static setting. 

\begin{definition} 
\label{definition:top-k-densest-static} 
Given an undirected graph \dgraph{\vertices}{\edges} and an integer $k>0$, 
the top-$k$ densest subgraphs of {\graph} is a set of $k$ disjoint maximal set of vertices 
$\collectionsubsetv = \{ \subsetv_1, \ldots, \subsetv_k\}$ that maximize the 
sum of its densities:
\begin{eqnarray}
&& \topk{\collectionsubsetv}  = \smash{\displaystyle\max} \sum \limits_{i=1}^{k}  \density{\subsetv_i}, 
\text{ for all }\subsetv_i \in \collectionsubsetv \quad \text{ subject to }  \nonumber \\
&& \text{there is no } \subsetv_j \supset \subsetv_i \mid \density{\subsetv_j} \geq \density{\subsetv_i}, \quad \text{ for all } \subsetv_i , \subsetv_j \subseteq \collectionsubsetv \\ 
&& \subsetv_i \cap \subsetv_j = \varnothing,  \quad \text{ for all } i, j \in \{1\ldots k\}, i \neq j.
\end{eqnarray}
\end{definition}

As already shown by~\citet{balalau2015finding},
the problem defined aboce is {\nphard}, for any $k>1$.
The problem we consider in this paper is the following. 

\begin{problem} 
\label{problem:top-k-densest-streaming} 
Given a graph stream $\{\streamedge{i}\}$
and a sliding window length {\windowsize}, 
maintain the top-$k$ densest subgraphs 
$\topk{\collectionsubsetv}$ 
of the graph~$\streamgraph{t}{\windowsize}$, at any given time $t$.
\end{problem}

\section{Background}
\label{sec:background}

In this section we present 
a brief review over several algorithms
for finding dense subgraphs. 
Additionally, we discuss how these methods can be used for solving 
Problem~\ref{problem:top-k-densest-streaming}.

\spara{Densest subgraph in static graphs.}
Finding
the densest subgraph according to the density definition~(\ref{equation:density})
can be solved in polynomial time. 
An elegant solution involving reduction to the minimum-cut problem 
was given by \citet{goldberg1984finding}.
As the fastest algorithm to solve the minimum-cut problem runs in 
$\bigO(nm)$ time~\cite{orlin2013maxflows}, 
Goldberg's algorithm is not scalable to large graphs.
 
\citet{asahiro1996greedily} and \citet{charikar2000greedy} propose  
a linear-time algorithm that provides a factor-2 approximation.
This algorithm iteratively removes the vertex 
with the lowest degree in each iteration, 
until left with an empty graph. 
Among all subgraphs considered during this vertex-removal process, 
the algorithm returns the densest.
The time complexity of this greedy algorithm is $\bigO(m+n)$.
\citet{bahmani2012densest} 
propose a \mapreduce version of the greedy algorithm, 
with approximation ratio $2(1+\epsilon)$, 
while making $\bigO(\log_{1+\epsilon}n)$ passes over the input graph.

\spara{Core decomposition in static graphs.}
\label{section:core-decomposition}
The core decomposition of a graph {\graph} 
is the process of identifying all cores of \graph, 
as defined in~\ref{definition:k-core-decomposition}. 
\citet{batagelj2003} propose a linear-time algorithm to obtain the core decomposition.
The algorithm first considers the whole graph 
and then repeatedly removes the vertex with the smallest degree.
The core number $\core{v}$ of a vertex $v$ 
is set equal to the degree of $v$ 
at the moment that $v$ is removed from the graph.


\spara{Densest subgraph in evolving graphs.}
There is a growing body of literature
on finding dense subgraphs in evolving graphs \cite{bhattacharya2015space, mcgregor2015densest, epasto2015efficient, gionis2015dense}. 
We focus mainly on the deterministic algorithm for densest subgraph in evolving graphs. 
For instance, \citet{epasto2015efficient} propose 
an efficient algorithm for computing the densest subgraph in the dynamic graph model~\cite{eppstein1998dynamic}.
Their work assumes that edges are inserted into the graph adversarially but deleted randomly. 
Even though the algorithm can, in practice, handle arbitrary edge deletions, its approximation guarantees hold only under the random edge-deletion assumption.
The algorithm is similar to the one by \citet{bahmani2012densest}, and 
it provides a $2(1+\epsilon)^6$-approximation of the densest subgraph, while requiring 
polylogarithmic
amortized cost per update with high probability.

\spara{Core maintenance in evolving graphs.}
\citet{sariyuce2013streaming} propose the \emph{traversal algorithm}, 
for efficient core maintenance.
This algorithm identifies a small set of vertices 
that are affected by edge updates and processes these vertices in linear time
in order to maintain a valid core decomposition. 
\citet{li2014efficient} propose an efficient three-stage algorithm 
for core maintenance in large dynamic graphs. 
The algorithm maintains a core decomposition of an evolving graph 
by applying updates to very few vertices in the graph.
Once these few vertices have been identified, 
the algorithm computes the correct core numbers via a quadratic operation.


\spara{Finding top-$k$ densest subgraphs.}
 \label{subsection: topk}
The problem of finding top-$k$ densest subgraphs
has been mainly studied for finding 
\emph{overlapping} subgraphs in static graphs~\cite{balalau2015finding,galbrun2016top}.

Next, we discuss how the algorithms
presented above 
(\citet{charikar2000greedy}, \citet{batagelj2003}, \citet{bahmani2012densest}, \citet{sariyuce2013streaming}, 
\citet{li2014efficient}, and \citet{epasto2015efficient})
can be used to produce top-$k$ densest subgraphs.

Our first observation 
is that a set of $k$ dense subgraphs
can be obtained from any algorithm that finds the densest subgraph
by $k$ repeated invocations. 
The time complexity of computing a set of $k$ dense subgraphs in this manner
is simply the running-time complexity of the densest-subgraph algorithm multiplied by $k$.
From a practical point of view, 
all the static algorithms mentioned
are not in-place algorithms, and thus require copying the whole graph for processing.
Furthermore, when a vertex or edge is added or deleted from the graph, 
the whole $k$ dense subgraph computation has to be repeated. 

The second observation is that, 
by using the algorithm of \citet{epasto2015efficient}, 
we can obtain a set of $k$ dense subgraphs by running $k$ instances of the fully-dynamic algorithm 
in a pipeline manner. 
The idea is to run $k$ instances of the algorithm
in which the output of  each instance $i \in \{0\ldots k-1\}$ 
is fed into the next $(i+1)$ instance as a removal operation.
The pipeline version of the algorithm requires keeping $k$ copies of the input graph 
and an additional $\bigO(kn)$ size space for bookkeeping.
Note that the output of each instance of the pipeline might cascade, 
which requires updating the vertices in all the instances. 
In particular, vertices that cease to be part of solution in upstream instances need to be added in downstream instances. 
Likewise the vertices that become part of densest subgraphs in upstream instances need to be removed from a downstream instances.
The modification of the algorithm,
as discussed above, is expensive in terms of memory, 
as it requires replicating the graph and the algorithm's structures $k$ times.
In addition, running and maintaining $k$ parallel instances makes the algorithm compute-intensive.

Finally, 
to maintain top-$k$ densest subgraphs in evolving graphs, 
we can leverage algorithms for core decomposition 
maintenance~\cite{sariyuce2013streaming,li2014efficient}. by leveraging Lemma~\ref{lema:2-approximation},
Thus, the idea is to find and maintain top-$k$ disjoint maximum cores.
In order to maintain such cores
we run a single instance of the algorithm by \citet{sariyuce2013streaming} 
or \citet{li2014efficient} that maintains the core number of all the vertices in the graph.
We then extract the top-$k$ densest subgraphs by:
($i$) extracting the main core, 
($ii$) removing the vertices in the main core and updating the core number for rest of the vertices, and 
($iii$) repeating the steps until $k$ subgraphs are extracted.
\vspace{-4mm}
\section{Algorithm}
The main idea of our algorithm is to maintain and update multiple dense subgraphs online.
These subgraphs are candidates for the top-$k$ densest subgraphs.
However, maintaining multiple subgraphs for fully-dynamic streams requires answering two interesting questions: ($i$) how to reduce the search space of the solution, and ($ii$) how to split the whole graph into subgraphs.

To answer the aforementioned questions, we make two observations. 
First, since dense subgraphs are formed by relatively high-degree vertices, one can find dense subgraphs by keeping track of these high-degree vertices only. 
Second, these subgraphs can be updated locally upon edge updates, without affecting the other parts of the graph.

Based on these observations, we develop an algorithm that reduces the solution space by considering only high-degree vertices, and divides the whole graph into smaller subgraphs, each representing a dense subgraph.
The top-$k$ densest subgraphs among the candidate subgraphs provide a solution for Problem~\ref{problem:top-k-densest-streaming}. 



We begin by designing an algorithm to find the densest subgraph (top-1) and then we extend it to find the top-$k$ densest subgraphs.
Our algorithm might not be the most efficient solution for the (top-1) densest-subgraph problem per se, but it provides efficient outcomes when extended to solve the top-$k$ densest-subgraph problem.

%
We start by defining some properties of the densest subgraph that we leverage in our algorithm.

\begin{lemma} 
\label{lemma-degree}
Given an undirected graph $\dgraph{\vertices}{\edges}$, the densest subgraph $\densest \subseteq \vertices$ with density $\density{\densest}$, all the vertices $v \in \densest$ have degree $\idegree{v}{\densest} \geq \density{\densest}$.
\end{lemma}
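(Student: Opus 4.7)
The plan is to argue by contradiction, exploiting the maximality of $\rho_{\densest}$ in the definition of the densest subgraph. Suppose toward contradiction that some vertex $v \in \densest$ has internal degree $\idegree{v}{\densest} < \density{\densest}$. I would then consider the candidate subgraph $\subsetv' = \densest \setminus \{v\}$ obtained by deleting $v$, and show that its density strictly exceeds $\density{\densest}$, which contradicts Definition~\ref{definition:densest}.

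The key calculation I would carry out is the following. By definition, $|\subsete{\densest}| = \density{\densest} \cdot |\densest|$. Removing $v$ from $\densest$ removes exactly $\idegree{v}{\densest}$ edges, so
\begin{equation*}
|\subsete{\subsetv'}| = |\subsete{\densest}| - \idegree{v}{\densest} > \density{\densest}\cdot|\densest| - \density{\densest} = \density{\densest}(|\densest|-1) = \density{\densest}\cdot|\subsetv'|.
\end{equation*}
Dividing through by $|\subsetv'|$ (which is nonzero, since if $\densest = \{v\}$ then $\density{\densest}=0$ and the hypothesis $\idegree{v}{\densest} < \density{\densest}$ is already impossible) yields $\density{\subsetv'} > \density{\densest}$, contradicting the fact that $\densest$ maximizes density.

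There is essentially no obstacle here beyond the two routine edge cases that I would briefly handle: the degenerate case where $|\densest|=1$ (trivial, as noted above), and the need to verify that $\subsetv' \subseteq \vertices$ is a valid candidate in the maximization~(\ref{equation:densest}), which it is by construction. The lemma is a direct structural consequence of the ``peel the lowest-degree vertex'' intuition that also underlies the Asahiro--Charikar $2$-approximation, and the proof I have sketched is the standard one-line argument making that intuition rigorous.
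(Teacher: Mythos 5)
Your proof is correct and follows the same route as the paper's: assume a vertex of $\densest$ has internal degree below $\density{\densest}$, remove it, and observe that the density strictly increases, contradicting optimality. The paper states this argument in one sentence without the explicit edge/vertex count; your version simply makes that calculation (and the $|\densest|=1$ edge case) explicit.
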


\begin{proof}
This lemma holds according to the definition of optimal density. 
In an optimal solution, each vertex has degree larger than or equal to $\density{\densest}$. 
Otherwise, removing the vertex from the subgraph will increase the average degree, and thus the density, of the subgraph.
\end{proof}

Given Lemma \ref{lemma-degree}, 
at any time $t$, 
the densest subgraph $\densest_t$ of graph $\graph_t$ 
contains only vertices $v$ that have degree 
$\degree{v} \geq \idegree{v}{\densest_t} \geq \density{\densest_t}$.
Then, given $\graph_t$ and $\density{\densest_t}$, we want to compute the densest subgraph after the addition of a new vertex $u \notin V_t$ at time $t+1$. 

Let $\degree{u}$ be the degree of vertex $u \in V_{t+1}$ and $\densest_{t+1}$ be the densest subgraph at time $t+1$.
For simplicity, assume that the graph $\graph_{t+1}$ is connected.
According to Lemma \ref{lemma-degree}, for any vertex $u$ to be the part of the densest subgraph, its internal degree satisfies 
$\idegree{u}{\densest_{t+1}}\geq \density{\densest_{t+1}}$.
As vertex $u$ is added to the graph the new density is always greater, i.e., $\density{\densest_{t+1}} \geq \density{\densest_t}$.
Therefore, for vertex $u$ to be the part of densest subgraph, the degree of vertex u should satisfy $\degree{u} \geq \density{\densest_t}$.
Therefore, if the degree of vertex $u$ is lower than the $\density{\densest_t}$, it cannot be part of the densest subgraph $\density{\densest_{t+1}}$ and can be ignored.

Now, considering the case when $\degree{u} \geq \density{\densest_t}$.
Adding the vertex $u$ to densest subgraph will update the density:
\begin{myequation}
\density{\densest_{t+1}} \leftarrow \frac{\mid \subsete{\densest_t} \mid + \ \idegree{u}{\densest_t} }{ \mid \densest_t \mid +1 }.
\end{myequation} 
We also know that $\density{\densest_{t+1}} \geq \density{\densest_t}$, which means that 
\begin{myequation}
 \frac{\mid \subsete{\densest_t} \mid + \ \idegree{u}{\densest_t} }{ \mid \densest_t \mid +1 } \geq \frac{\mid \subsete{\densest_t} \mid}{ \mid \densest_t \mid}.
\end{myequation}

From this inequality it follows $\idegree{u}{\densest_t} \geq \density{\densest_t}$.
Using these properties, we ignore the vertices of the new edge that have degree lower than the current estimate of the density. 
Further, we are interested in finding the main core in the remaining subgraph of high-degree vertices, as it represents a 2-approximation of the densest subgraph according to Lemma~\ref{lema:2-approximation}.
To this end, we propose a new data structure that relies on Lemma~\ref{lemma-degree}, the \emph{snowball}.

\subsection{Snowball}
\label{sec: snowball}
A snowball \snowball is an incremental data structure that stores a strongly connected subgraph,
which maintains the following invariants:
\begin{squishlist}
\item The core number $\intcore{v}{\snowball}$ of each vertex $v \in \snowball$ inside a snowball is equal to the main core ($\mcore{\snowball}$) of the snowball.
\item All the vertices in the snowball are connected.
\end{squishlist}

These invariants ensure that all the vertices in the snowball have the same core number, which is the main core of the snowball by definition.
A snowball maintains these invariants while handling the following graph update operations: a) adding/removing a vertex, and b) adding/removing an edge.


\subsection{Bag of snowballs}
The high-degree vertices in the graph are assigned to a snowball.
As these vertices are not strongly connected, they might end up in different snowballs.
We store each of these disconnected snowballs in a data structure called the bag, denoted by \bag.

The bag ensures that each snowball is vertex disjoint. 
Further, the bag provides an additional operation: extracting the densest snowball among the set of snowballs.
The density of the extracted snowball is the maximal density, which is the threshold separating the high-degree vertices from the low-degree ones.
We denote this estimate of the maximal density \estimatedensity.

The bag is a supergraph which contains a set of snowballs and all the edges between the snowballs.
We maintain all the core numbers of the nodes in the bag by leveraging a core decomposition algorithm (see Section~\ref{section:core-decomposition}). 
The core number of each node in the bag is used to ensure that each node has the maximum possible core number.
Figure \ref{bag-example-1} provides an example explaining one of the issues that may arise.
In the example, the bag contains two snowballs, however, it is possible to produce a new snowball with a larger core number.
\begin{figure}[t]
	\centering
	\begin{tikzpicture}  [scale=1.1,every node/.style={scale=1.2}, shift=(current page.center)]]

\tikzstyle{vertex} = [font=\large, inner sep = 1pt,  minimum width=6pt, 
	thick, circle, text=black!90, draw=yafcolor5!90, fill=yafcolor5!05]
\tikzstyle{line} = [thick, yafcolor5!90]

\tikzset{
  text style/.style={text=black!70, font=\footnotesize}
}

\node[text width=4cm, font=\scriptsize, text centered] at (1.5,1.5) 
    {$D_1$};
\node[text width=4cm, font=\scriptsize, text centered] at (3.5,1.5) 
    {$D_2$};

\node[text width=4cm, font=\scriptsize, text centered] at (2.5,2.3) 
    {Bag};

\draw[thick,rounded corners=8pt] (0.8,0.9) -- (0.8,2.15) -- (2.2,2.15) -- (2.2,0.9) -- (0.8,0.9);
\draw[thick,rounded corners=8pt] (2.8,0.9) -- (2.8,2.15) -- (4.2,2.15) -- (4.2,0.9) -- (2.8,0.9);
\draw[thick,rounded corners=8pt] (0.7,0.8) -- (4.3,0.8) -- (4.3,2.5) -- (0.7,2.5) -- (0.7,0.8);

\node[vertex] (v1) at (1.1,1.1) {};
\node[vertex] (v2) at (1.9,1.1) {};
\node[vertex] (v3) at (1.1, 1.9) {};
\node[vertex] (v4) at (1.9,1.9) {};

\node[vertex] (v5) at (3.1,1.1) {};
\node[vertex] (v6) at (3.9,1.1) {};
\node[vertex] (v7) at (3.1,1.9) {};
\node[vertex] (v8) at (3.9,1.9) {};

\draw (v1) edge [line, -, bend right = 20] (v2);
\draw (v1) edge [line, -, bend left = 20] (v3);
\draw (v2) edge [line, -, bend right = 20] (v4);
\draw (v3) edge [line, -, bend left = 20] (v4);
\draw (v5) edge [line, -, bend right = 20] (v6);
\draw (v5) edge [line, -, bend left = 20] (v7);
\draw (v6) edge [line, -, bend right = 20] (v8);
\draw (v7) edge [line, -, bend left = 20] (v8);

\draw (v4) edge [line, dashed ,yafcolor4!90, -, bend left = 0] (v5);
\draw (v4) edge [line, dashed, yafcolor4!90, -, bend left = 0] (v7);
\draw (v2) edge [line, dashed, yafcolor4!90, -, bend left = 0] (v5);
\draw (v2) edge [line, dashed, yafcolor4!90, -, bend left = 0] (v7);
\end{tikzpicture} 
\caption{Example showing that the bag requires maintaining the core number of the vertices. 
Initially, the bag contains two snowballs with core number 2, i.e., $D_1$ and $D_2$.
Consider the arrival of the edges shown in red. 
The greedy assignment of the edges might skip creating a new snowball with core number 3, using the four nodes in the middle.
}
\label{bag-example-1}
\vspace{-\baselineskip}
\end{figure}
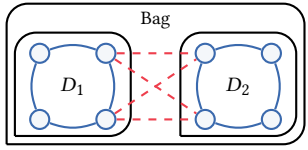
Next, we define the algorithms to update this data structure upon graph updates.

\subsection{Addition operations}

\spara{Vertex addition:}
As discussed in Section \ref{sec:prel}, the updates appear in the form of an edge stream.
Here, we define the vertex addition algorithm that acts as a helper for edge addition.
The algorithm is triggered when at least one of the endpoints of a new edge is a high-degree vertex.
In particular, there are two cases to consider: 1) the bag already contains the high-degree vertex, and 2) the bag does not contain the high-degree vertex.
In both cases, the goal is to add the new vertex to one of the snowballs (if needed). 

Algorithm \ref{alg: vertex_addition_bag} defines the algorithm for vertex addition.
For the first case, the algorithm scans the bag to find the snowball that contains the vertex and returns it.
For the second case, the algorithm first identifies the candidate snowballs, then it assigns the vertex to one of the candidate snowballs.
The candidate snowballs are the ones having the main core number smaller than or equal to the internal degree of the new vertex (\intcore{\snowball}{u} $\geq$ \mcore{\snowball}).
Among the candidate snowballs, the new node is assigned to the snowball with maximum internal degree $\idegree{\snowball}{u}$, breaking ties randomly.

Once a vertex is added to a snowball, the core number of the snowball may increase. 
This change requires removing the vertices with core number lower than the main core of the snowball.
This procedure can be implemented efficiently in linear time by sorting the vertices based on their degree similar to bin sort.\footnote{The \textsc{MaintainInvariant} method at line \ref{fcall:mantaininvariant} of Algorithm \ref{alg: vertex_addition_bag}.}

\spara{Verification.}
Due to the greedy assignment of vertex to the snowball, it is possible that the vertex ends up not having the highest possible core number.
For example, Figure \ref{verify-example-1} shows an example where the greedy assignment does not result in optimal solution.

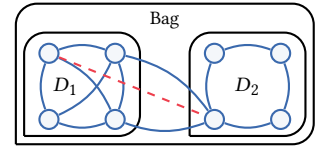
\begin{figure}[t]
	\centering
	\begin{tikzpicture}  [scale=1.1,every node/.style={scale=1.2}, shift=(current page.center)]]

\tikzstyle{vertex} = [font=\large, inner sep = 1pt,  minimum width=6pt, 
	thick, circle, text=black!90, draw=yafcolor5!90, fill=yafcolor5!05]
\tikzstyle{line} = [thick, yafcolor5!90]

\tikzset{
  text style/.style={text=black!70, font=\footnotesize}
}

\node[text width=4cm, font=\scriptsize, text centered] at (1.3,1.5) 
    {$D_1$};
\node[text width=4cm, font=\scriptsize, text centered] at (3.5,1.5) 
    {$D_2$};

\node[text width=4cm, font=\scriptsize, text centered] at (2.5,2.3) 
    {Bag};

\draw[thick,rounded corners=8pt] (0.8,0.9) -- (0.8,2.15) -- (2.2,2.15) -- (2.2,0.9) -- (0.8,0.9);
\draw[thick,rounded corners=8pt] (2.8,0.9) -- (2.8,2.15) -- (4.2,2.15) -- (4.2,0.9) -- (2.8,0.9);
\draw[thick,rounded corners=8pt] (0.7,0.8) -- (4.3,0.8) -- (4.3,2.5) -- (0.7,2.5) -- (0.7,0.8);

\node[vertex] (v1) at (1.1,1.1) {};
\node[vertex] (v2) at (1.9,1.1) {};
\node[vertex] (v3) at (1.1, 1.9) {};
\node[vertex] (v4) at (1.9,1.9) {};

\node[vertex] (v5) at (3.1,1.1) {};
\node[vertex] (v6) at (3.9,1.1) {};
\node[vertex] (v7) at (3.1,1.9) {};
\node[vertex] (v8) at (3.9,1.9) {};

\draw (v1) edge [line, -, bend right = 20] (v2);
\draw (v1) edge [line, -, bend left = 20] (v3);
\draw (v1) edge [line, -, bend right = 20] (v4);
\draw (v2) edge [line, -, bend right = 20] (v3);
\draw (v2) edge [line, -, bend right = 20] (v4);
\draw (v3) edge [line, -, bend left = 20] (v4);
\draw (v5) edge [line, -, bend right = 20] (v6);
\draw (v5) edge [line, -, bend left = 20] (v7);
\draw (v6) edge [line, -, bend right = 20] (v8);
\draw (v7) edge [line, -, bend left = 20] (v8);
\draw (v4) edge [line, -, bend left = 20] (v5);
\draw (v2) edge [line, -, bend right = 20] (v5);
\draw (v3) edge [line, dashed ,yafcolor4!90, -, bend left = 0] (v5);

\end{tikzpicture} 
\caption{Example showing that arrival of a new edge allows the vertex that is part of snowball $D_2$ to become part of snowball $D_1$, which has greater core number (3).} 
\label{verify-example-1}
\vspace{-\baselineskip}
\end{figure}
Therefore, after addition, the algorithm ensures that the core number of the snowball, where the new vertex is added, equals the core number of the new vertex in the graph.
The algorithm verifies that the core number of the added vertex by comparing it with the core number of the vertex in the bag.  
Note that the bag represents the supergraph containing all the snowballs and edges between the snowballs.
If the core number within the bag is larger than the one in the snowball,
the algorithm merges all the snowballs in the induced subgraph of newly added vertex. 
As all the vertices in the induced subgraph have the same core number, merging them ensures creating a larger snowball.\footnote{The \textsc{FixMainCore} method at line \ref{fcall:fixmaincore} of Algorithm \ref{alg: edge_addition}.}
We leverage the core decomposition algorithm by \citet{sariyuce2013streaming} for the implementation. 


\begin{algorithm}[tb]
\footnotesize
\caption{Vertex Addition in the Bag of Snowballs}
\label{alg: vertex_addition_bag}
\begin{algorithmic}[1]
\Procedure{addtoBag}{u}
	\State $\densest \leftarrow \emptyset$
	\For{\texttt{$\isnowball{i} \in \bag$}} \label{alg:line:loopsnow}
 		\If {$u \in \isnowball{i}$}
 			\State \Return $\isnowball{i}$ \Comment {First Case}
 		\EndIf
	 	\If{ ($\idegree{u}{\isnowball{i}} \geq \mcore{\isnowball{i}}$ and $\density{\densest} < \density{\isnowball{i}} $)}
			\State $\densest \leftarrow \isnowball{i}$	
		\EndIf 
    \EndFor
	\If {$\densest = \emptyset$}      
     	\State $\densest \gets \{u\}$
     \Else 
     	\State $\densest \gets \densest \cup \{u\}$
     	 \State \label{fcall:mantaininvariant} \Call{MaintainInvariant}{$\densest$} 
     \EndIf 
    	\State \Return $\densest$  \Comment {Second Case} 
\EndProcedure
\end{algorithmic}
\end{algorithm}

\begin{algorithm}[tb]
\footnotesize
\caption{Maintain Invariant}
\label{alg: maintain_invariant}
\begin{algorithmic}[1]
\Procedure{MaintainInvariant}{\isnowball{i}}
\Do 
	\State $repeat \gets false$
	\For{\texttt{$u \in \isnowball{i}$}}
		 \If{ ($(\intcore{u}{\isnowball{i}} < \mcore{\isnowball{i}} $) }
		 	\State $\isnowball{i} \gets \isnowball{i} \backslash \{u\}$
		 	\If {($\degree{u} \geq \estimatedensity $)}
		 		\State \Call{addtoBag}{u}
		 	\EndIf
		 	\State $repeat \gets true$
		 \EndIf
      \EndFor
\doWhile{$repeat$}
\EndProcedure
\end{algorithmic}
\end{algorithm}


\begin{algorithm}
\footnotesize
\caption{Fix Main Core}
\label{alg: fixMainCore}
\begin{algorithmic}[1]
\Procedure{fixMainCore}{}
\For {\texttt{$\isnowball{i} \in \bag$}} 
	\If {($\isnowball{i} \cap \induced{u} > $  0)}
		\State $\isnowball{u} \gets \isnowball{u} \cup \isnowball{i}$
	\EndIf
\EndFor
\State  \Call{MaintainInvariant}{\isnowball{u}}
\EndProcedure
\end{algorithmic}
\end{algorithm}

\begin{theorem}
Given the bag \bag, the algorithm ensures that \bag contains the main core of the graph within one of the snowballs after the vertex addition.
\end{theorem}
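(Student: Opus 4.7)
I would prove the claim by induction on the sequence of vertex-addition events, carrying the inductive hypothesis that the main core of the current graph lies entirely inside a single snowball of \bag. The base case (empty graph) is vacuous. For the inductive step, assume the invariant holds at time $t$ and let $u$ be the vertex inserted at time $t+1$.

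First I would dispose of the easy case $u \notin \mcoreg{\ell}(\graph_{t+1})$. Since core numbers are monotone non-decreasing under vertex and edge insertions, the main core cannot have shrunk; since it excludes $u$, it cannot have grown either. Hence $\mcoreg{\ell}(\graph_{t+1}) = \mcoreg{\ell}(\graph_{t})$, and the snowball containing it inherits directly from the inductive hypothesis, untouched by the $u$-related operations (which either discard $u$ when $\degree{u} < \estimatedensity$ or route it to a snowball disjoint from the main core).

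The substantive case is $u \in \mcoreg{\ell}(\graph_{t+1})$. Lemma~\ref{lemma-degree} gives $\degree{u} \ge \density{\mcoreg{\ell}(\graph_{t+1})} \ge \estimatedensity$, so $u$ passes the high-degree filter and enters the bag. The greedy rule places $u$ into some snowball $\snowball'$ (possibly a fresh singleton); afterwards \textsc{FixMainCore} consults the bag-wide core decomposition, maintained exactly by the traversal algorithm of \citet{sariyuce2013streaming}, detects that $\core{u}$ inside \bag exceeds $\mcore{\snowball'}$, and merges every snowball that intersects $\induced{u}$. Because Definition~\ref{definition:induced-core} makes $\induced{u}$ the maximal connected set of vertices reachable from $u$ with core number $\core{u}$, and because $\mcoreg{\ell}(\graph_{t+1})$ is itself connected with every vertex having core number $\core{u}$, the merged snowball contains exactly $\mcoreg{\ell}(\graph_{t+1})$.

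The main obstacle will be certifying that every $v \in \mcoreg{\ell}(\graph_{t+1}) \setminus \{u\}$ actually resides in some snowball meeting $\induced{u}$ just before \textsc{FixMainCore} is invoked; otherwise the merge could miss part of the new main core. For this I would invoke Lemma~\ref{lemma-degree} once more on each such $v$, obtaining $\degree{v} \ge \estimatedensity$, so $v$ was never pruned on its own arrival and by the inductive hypothesis already lives in some snowball of \bag; connectivity of $\mcoreg{\ell}(\graph_{t+1})$ then forces that snowball to meet $\induced{u}$. A final appeal to \textsc{MaintainInvariant} confirms that no main-core vertex is evicted after the merge, since each retains internal degree at least $\core{u}$ inside the enlarged snowball, closing the induction.
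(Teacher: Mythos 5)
Your proof follows essentially the same route as the paper's: induction on updates with the invariant that the main core sits inside one snowball, a case split on whether the new vertex joins the main core, and reliance on the verification/merge step (\textsc{FixMainCore} over $\induced{u}$) to restore the invariant. You simply spell out details the paper leaves implicit — in particular the check that every other main-core vertex already sits in a snowball meeting $\induced{u}$ — so no substantive difference to report.
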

\begin{proof}
Let us assume that at time $t$ the bag contains the main core of the graph.
Now, we need to show that at time $t+1$, after the node addition, the bag still contains the main core of the graph.
In general, vertex addition method is called whenever there is an edge addition.
The only way for the new vertex to affect the main core of the graph is that the new vertex is the part of the main core.
After the addition of the vertex in the bag, the algorithm verifies the core number by comparing the core number of the vertex in the bag and the snowball.
If the core number of the vertex in the bag is greater, the algorithm merges the snowballs containing the vertices in the induced graph of the new node in the bag.
This creates a new snowball with a greater core number. 
\end{proof}

\spara{Edge addition:}
In this case, the state of the bag is only affected if at least one of the vertices in the new edge is a high-degree vertex.
In particular, there are two cases to consider: a) only one of the vertices is a high-degree vertex and b) both the vertices are high-degree vertices.
For the first case, the algorithm leverages the vertex addition method to add the vertex to the bag of snowballs.
For the second case, when both vertices are added to the bag of snowballs, the algorithm verifies that the main core exists in the bag. 
When both vertices are added to the same snowball, the algorithm adds the new edge to the same snowball and ensures that the invariant holds.
Conversely, when the two vertices are added to two different snowballs, the algorithm verifies if the vertices exist in each others' induced subgraphs and fixes the main core for both the vertices. 
Algorithm \ref{alg: edge_addition} describes the algorithm for edge addition.

\begin{algorithm}[tb]
\footnotesize
\caption{Edge Addition}
\label{alg: edge_addition}
\begin{algorithmic}[1]
\Procedure{addEdge}{(u,v)}
\If {(($\degree{u} < \estimatedensity) $) and ($ \degree{v} < \estimatedensity $))}
	\State \Return 
\ElsIf {(($\degree{u} \geq \estimatedensity $) and ($\degree{v} < \estimatedensity $))}
	\State $\isnowball{u} \gets$ \Call {addtoBag}{u}
\ElsIf {(($\degree{u} < \estimatedensity$) and ($\degree{v} \geq \estimatedensity $)}
	\State $\isnowball{v} \gets$ \Call {addtoBag}{v}
\Else
	\State $\isnowball{u} \gets$ \Call {addtoBag}{u}
	\State $\isnowball{v} \gets$ \Call {addtoBag}{v}
	\If {($\isnowball{u} = \isnowball{v}$)}
		\State $\isnowball{u} \gets \isnowball{u} \cup (u,v)$
		\State \Call{MaintainInvariant}{$\isnowball{u}$}
	\ElsIf {($v \in \induced{u}$)}
		\State \label{fcall:fixmaincore}\Call{fixMainCore}{u} 
	\EndIf
\EndIf
\EndProcedure
\end{algorithmic}
\end{algorithm}

\begin{theorem}
Algorithm \ref{alg: edge_addition} maintains the main core of the graph in one of the snowballs inside the bag.
\end{theorem}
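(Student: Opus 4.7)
The plan is to proceed by induction on the sequence of edge insertions handled by \textsc{addEdge}, treating the previous theorem on vertex addition as a black box. The inductive hypothesis is that immediately before an arriving edge $(u,v)$, the bag \bag already contains the main core \mcoreg{\ell} of the current graph inside a single snowball. The task is to verify that after Algorithm~\ref{alg: edge_addition} completes, the same invariant holds for the updated graph.

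First I would dispose of the easy branches. If both $\degree{u}$ and $\degree{v}$ are below \estimatedensity, then by Lemma~\ref{lemma-degree} neither endpoint can belong to any core of number at least \estimatedensity, so in particular neither lies in the updated main core; the algorithm correctly returns without touching \bag. If exactly one endpoint, say $u$, is high-degree, then only $u$ can be newly introduced into the main core, and a single call to \textsc{addToBag} suffices by the preceding theorem.

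The substantive case is when both $u$ and $v$ exceed \estimatedensity. Both calls to \textsc{addToBag} place the endpoints into snowballs and, by the previous theorem, guarantee that \bag contains the main core of the graph as though only the vertices (and not yet the edge $(u,v)$) had been added; I then need to account for the effect of inserting the edge itself. In sub-case (a), when $u$ and $v$ land in the same snowball $\isnowball{u}$, placing $(u,v)$ inside $\isnowball{u}$ can only raise $\intcore{u}{\isnowball{u}}$ and $\intcore{v}{\isnowball{u}}$, so $\mcore{\isnowball{u}}$ either stays constant or grows by one, and \textsc{MaintainInvariant} evicts precisely those vertices whose internal core number drops below the new value and re-feeds them through \textsc{addToBag} if their global degree still exceeds \estimatedensity. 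In sub-case (b), when $u$ and $v$ land in distinct snowballs but $v\in\induced{u}$ in the supergraph \bag, by Definition~\ref{definition:induced-core} every vertex of $\induced{u}$ shares the same core number in \bag, so the arrival of $(u,v)$ can bump this common core number by one, producing a new and strictly larger main core contained in $\induced{u}$; the call to \textsc{fixMainCore} merges every snowball intersecting $\induced{u}$ and re-invokes \textsc{MaintainInvariant}, reconstituting the new main core inside a single snowball.

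The main obstacle is justifying sub-case (b): I need to argue that it is enough to merge only the snowballs intersecting $\induced{u}$, and that no snowball disjoint from $\induced{u}$ need be revisited. The key observation is that any vertex promoted into the new main core must be reachable from $u$ (or symmetrically from $v$) through a path of vertices of equal core number, and therefore lies in $\induced{u}$ by the defining maximality of the induced core subgraph. A subsidiary point is that the core decomposition on \bag must itself be kept current under the edge insertion, for which I would invoke the traversal-based maintenance algorithm of~\citet{sariyuce2013streaming} that the paper explicitly adopts. If neither endpoint enters a higher-order core, the main core is unchanged and the inductive hypothesis carries over directly, completing the argument.
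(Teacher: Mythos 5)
Your proposal follows essentially the same route as the paper's own proof: it reduces the low-degree and single-endpoint cases to the vertex-addition theorem, and for the case where the two high-degree endpoints land in different snowballs it relies on the supergraph maintained in the bag and on merging the snowballs that meet $\induced{u}$ via \textsc{fixMainCore}. Your write-up is considerably more explicit than the paper's (the induction framing, the same-snowball sub-case via \textsc{MaintainInvariant}, and the reachability argument for why only snowballs intersecting $\induced{u}$ need merging), but the underlying argument is the one the authors give.
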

\begin{proof}
The proof for all the cases, other than the case when both the vertices of the new edge are assigned to two different snowballs, is similar to the vertex addition algorithm.
Therefore, we consider the case when both end vertices of the added edge are added to two different snowballs.
For this case, we rely on the graph in the bag. 
We check if both vertices are in the same core graph in the bag, and fix the core number of the vertices if they belong to the same induced subgraph. 
This ensures creating the graph with the highest core number.
\end{proof}
\subsection{Removal operations}

\spara{Vertex removal:}
Similarly to the addition operations, we first define the procedure for removing a vertex from the bag of snowballs.
The vertex removal method is used as a subroutine for the edge removal process.
A vertex is only removed from the bag when its degree becomes lower than the maximal density. 
Therefore, according to Lemma~\ref{lema:2-approximation}, the removed vertex cannot be part of the main core.
The algorithm removes the vertex from the snowball within a bag without doing any other operation.



\begin{algorithm}[tb]
\footnotesize
\caption{Edge Deletion}
\label{alg: edge-deletion}
\begin{algorithmic}[1]
\Procedure{removeEdge}{(u,v)}
\If {(($\degree{u} <\estimatedensity $) and ($ \degree{v} < \estimatedensity $))}
	\State \Return
\EndIf
\If {(($\degree{u} < \estimatedensity$) and ($\degree{u}+1 \geq \estimatedensity $))}
	\State \Call {removeVertex}{u}
	\State \Return
\EndIf
\If {(($\degree{v} < \estimatedensity $) and ($\degree{v}+1 \geq \estimatedensity $))}
	\State \Call {removeVertex}{v}
	\State \Return
\EndIf
\For{\texttt{$\isnowball{i} \in \bag$}}
	 \If{($\isnowball{i} \cap (u,v) \neq \emptyset $)}
	 	\State $\isnowball{i} \gets \isnowball{i} \backslash (u,v)$
	 	\For{\texttt{$x \in \isnowball{i}$}}
	 		\If { ($\intcore{x}{B} > \intcore{x}{\isnowball{i}}$)} 
				\State \Call{fixMainCore}{x}
			\EndIf
		\EndFor
		\State \Call{MaintainInvariant}{$\isnowball{i}$}
	\EndIf
\EndFor
\EndProcedure
\end{algorithmic}
\end{algorithm}

\spara{Edge removal:}
Now we turn our attention to edge deletion, which follows the same pattern as edge addition.
Again, we leverage the bag to ensure that there exist a snowball with a core number equal to the main core of the graph.
Algorithm \ref{alg: edge-deletion} shows the algorithm for edge deletion.
The bag does not require any update when either one of the vertices is low-degree or if both the vertices belong to two different snowballs.
Therefore, we consider the case when one of the vertices lie at the boundary of high-degree vertices.
That is, the edge deletion moves the vertex from the high-degree to low-degree.
In this case, the algorithm only requires removing the vertex from the bag, without performing any other operations.

The interesting case is where both vertices of the deleted edge are high-degree and belong to the same snowball. 
In this case, the algorithm removes the edge from the snowball.
Further, it verifies and updates (if needed) the core number of the vertices affected by the update in the snowball.
Lastly, edge deletion might reduce the maximal density, and thus
require adding to the bag new vertices whose degree is now greater than the new maximal density.


\begin{theorem}
Given the bag containing a snowball with the same core number as the main core of the bag, after the edge deletion, Algorithm  \ref{alg: edge-deletion} maintains the main core of the graph in one of the snowballs inside the bag.
\end{theorem}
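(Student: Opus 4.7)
The plan is to proceed by induction on the stream of operations, using case analysis on the endpoints $(u,v)$ of the deleted edge that mirrors the branches of Algorithm~\ref{alg: edge-deletion}. I assume the inductive hypothesis that before the deletion $\bag$ contains a snowball $\isnowball{j}$ with $\mcore{\isnowball{j}}$ equal to the main-core number of $\graph$, and then verify that each branch preserves this property. The easy branches are dispatched first: if both $\degree{u},\degree{v}<\estimatedensity$, then by Lemma~\ref{lemma-degree} neither endpoint belongs to the main core, no snowball is touched, and the invariant is inherited verbatim. If exactly one endpoint, say $u$, drops below $\estimatedensity$ as a result of the deletion, then its degree in $\graph$ is now below the current 2-approximation of the optimal density, so by Lemma~\ref{lema:2-approximation} it cannot lie in the new main core, and calling \textsc{removeVertex} on $u$ is precisely the operation that excises it while leaving the other snowballs unaltered.

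The nontrivial case is when both $u$ and $v$ are high-degree. If they lie in different snowballs, no snowball's edge set is actually modified: the deleted edge was a bag-edge outside every snowball, so the collection of core numbers within each $\isnowball{i}$ and their invariants are untouched, and the main core of $\graph$ is still realized by $\isnowball{j}$. The genuinely delicate case, and the main obstacle, is when $u$ and $v$ lie in a common snowball $\isnowball{i}$. Here the deletion may (a) drop the internal core number of certain vertices in $\isnowball{i}$ by one, violating the first snowball invariant, and (b) cause some $x\in\isnowball{i}$ to have $\intcore{x}{\bag}>\intcore{x}{\isnowball{i}}$, meaning that $x$ is still connected through other snowballs to vertices of equal or higher core number in the supergraph. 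The for-loop of Algorithm~\ref{alg: edge-deletion} detects exactly these vertices and invokes \textsc{fixMainCore}, which merges the snowballs intersecting $\induced{x}$ so that the resulting snowball equals $x$'s connected component of equal core number in $\bag$. The subsequent call to \textsc{MaintainInvariant} then prunes any vertex whose internal core has fallen below $\mcore{\isnowball{i}}$ and re-inserts it through \textsc{addtoBag} if it is still high-degree, re-establishing both snowball invariants.

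To close the argument I would invoke two monotonicity facts that together show no main-core vertex is ever lost. First, edge deletion can only weakly decrease the core number of any vertex in $\graph$, so the new main core is contained in the union of vertices that were already in the bag before the deletion; thus the bag still sees every candidate. Second, the underlying streaming core-decomposition routine of Sariy\"uce et al.~\cite{sariyuce2013streaming}, which maintains $\intcore{\cdot}{\bag}$, correctly updates these numbers under edge removal, so the comparison $\intcore{x}{\bag}>\intcore{x}{\isnowball{i}}$ is accurate and \textsc{fixMainCore} is triggered precisely when a merge is needed to realize a larger core. Combining these with the previous paragraph, some snowball $\isnowball{j'}$ ends up with $\mcore{\isnowball{j'}}$ equal to the new main-core number of $\graph$, which is what we needed to show. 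The chief subtlety throughout is ruling out the scenario in which the deletion both lowers $\estimatedensity$ and promotes a previously ignored low-degree vertex into the new main core; this is handled by noting that $\estimatedensity$ is non-increasing under edge deletion, so any such vertex was already tracked in $\bag$ beforehand and is reached by the existing merge and maintenance calls.
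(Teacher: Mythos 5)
Your proof follows essentially the same route as the paper's: identify that the only branch that can disturb the invariant is the one where both endpoints are high-degree and lie in the same snowball, and then argue that the comparison $\intcore{x}{\bag} > \intcore{x}{\isnowball{i}}$ together with \textsc{fixMainCore} and \textsc{MaintainInvariant} restores a snowball realizing the main core. The paper's own proof is far terser --- it dismisses all other branches in one sentence and does not discuss the threshold at all --- so your added case analysis and the monotonicity observation that core numbers only weakly decrease under deletion are welcome elaborations rather than a different approach. One step in your closing paragraph is stated backwards, though: if \estimatedensity is non-increasing under deletion, then a vertex $w$ with $\degree{w}$ between the new and old thresholds was \emph{not} tracked in \bag beforehand, so the conclusion ``any such vertex was already tracked'' does not follow from that monotonicity. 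What actually covers this scenario is the mechanism you cite earlier --- the re-insertion via \textsc{addtoBag} inside \textsc{MaintainInvariant} for vertices whose degree exceeds the (lowered) threshold --- together with the observation that a vertex of the new main core has degree at least the new core number and hence clears the new threshold; so the slip is in the justification, not in the algorithmic mechanism, and the overall argument still stands.
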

\begin{proof}
An edge removal affects the bag of snowballs only when both the vertices corresponding to the removed edge belong to the same snowball.
In this case, edge removal might reduce the core number of all the vertices in the snowball.
The algorithm ensures that the vertices in the snowball have their maximum possible core number by comparing their core number in the snowball with the core number in the bag.
Therefore, by verifying and fixing the core numbers, the algorithm maintains the main core of the graph in one of the snowballs inside the bag.
\end{proof}

\subsection{Fully-dynamic top-$k$ densest subgraphs}
Now that we have a fully dynamic algorithm for finding the densest subgraph in sliding windows, we move our attention to the top-$k$ densest-subgraph problem.

Let $\estimatedensity \geq \density{1} \geq \ldots \geq \density{z-1}$ represent the densities of the $z$ subgraphs in the bag, where $\density{\densest}$ is the density of the densest subgraph.
The bag contains the vertices that have a degree greater than the density $\estimatedensity$.
To ensure that the bag contains at least $k$ subgraphs, we modify the algorithm to keep all the vertices with degree greater than $\density{k-1}$.
The only modification required is to replace $\density{\densest}$ by  $\density{k-1}$ in Algorithm \ref{alg: maintain_invariant}, Algorithm \ref{alg: edge_addition} and Algorithm \ref{alg: edge-deletion}. 
Further, we leverage the priority queue to extract $\density{k-1}$ from the bag of snowballs.
This simple modification ensures that the bag contains at least $k$ subgraphs and enables accessing the top-$k$ densest subgraphs in the sliding window model.
Note that the new definition of high-degree vertices is related to the density of the $k$-th top densest subgraph. 

The modified algorithm guarantees that the bag contains the vertices with a degree greater than the $\density{k-1}$ after any graph update.
These graph updates include both edge additions and removals.
It is necessary for the algorithm to consider edge additions, as they might affect the value of $\density{k-1}$ by merging multiple subgraphs.
Similarly, edge deletions have to be considered, as they might affect the value of $\density{k-1}$ by reducing the density of any of the top-$k$ densest subgraphs, merging multiple subgraphs, or splitting a subgraph. 

As the algorithm ensures keeping the main core in the bag, it guarantees $2$-approximation for the first densest subgraph ($k=1$) while providing a high-quality heuristic for $k > 1$ compared to other solutions.
We provide an example in Figure \ref{fig:sample-figure-top-k} for the top-k densest subgraph algorithm by expanding on the Figure \ref{fig:sample-graph}.
We conclude this section with the theorem that provides a bound for our proposed algorithm.
These bounds can be generalized for any algorithm that adapts to a solution for densest subgraph problem for the top-$k$ case (see section \ref{subsection: topk} for examples).
\begin{theorem}
\label{theorem:bounds}
Given an integer $k$, the bag contains a set of subgraphs that provides $2k$-approximation of the top-$k$ densest-subgraph problem.
\end{theorem}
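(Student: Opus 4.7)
The plan is to exploit the invariant maintained by the algorithm, namely that the bag always contains the main core of the underlying graph inside one of its snowballs, together with Lemma~\ref{lema:2-approximation} which gives a factor-$2$ approximation for the (single) densest subgraph. The top-$k$ analysis will then follow from a simple averaging argument that bounds the optimal sum by $k$ times the optimal single density.

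First, I would introduce notation: let $\collectionsubsetv^* = \{\subsetv_1^*, \ldots, \subsetv_k^*\}$ be an optimal top-$k$ solution with $\density{\subsetv_1^*} \geq \density{\subsetv_2^*} \geq \cdots \geq \density{\subsetv_k^*}$, and let $\density{\densest} = \density{\subsetv_1^*}$ denote the density of the overall densest subgraph of the current graph $\streamgraph{t}{\windowsize}$. Likewise, let $\{\subsetv_1, \ldots, \subsetv_k\}$ denote the top-$k$ snowballs extracted from the bag, sorted by density. The optimal objective satisfies the trivial upper bound
\begin{equation}
\topk{\collectionsubsetv^*} \;=\; \sum_{i=1}^{k} \density{\subsetv_i^*} \;\leq\; k\,\density{\densest},
\end{equation}
since each term is at most $\density{\densest}$.

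Next, I would lower-bound the algorithm's objective using only its best snowball. By the correctness theorems already established in this section (for vertex/edge additions and removals), the bag contains a snowball whose core number equals the main core $\mcoreg{\ell}$ of the whole graph. Because every vertex inside a snowball has internal degree at least $\mcore{\snowball}$, the density of that snowball is at least $\tfrac{1}{2}\,\core{\mcoreg{\ell}}$, and applying Lemma~\ref{lema:2-approximation} yields $\density{\subsetv_1} \geq \tfrac{1}{2}\,\density{\densest}$. Consequently,
\begin{equation}
\sum_{i=1}^{k} \density{\subsetv_i} \;\geq\; \density{\subsetv_1} \;\geq\; \tfrac{1}{2}\density{\densest} \;\geq\; \tfrac{1}{2k}\,\topk{\collectionsubsetv^*},
\end{equation}
which is exactly the claimed $2k$-approximation.

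The argument itself is short; the only real obstacle is justifying the invariant that the bag always holds the main core. I would do this by citing the three structural theorems proved earlier for \textsc{addEdge}, \textsc{removeEdge}, and the associated vertex-level procedures, which together show that after every update some snowball in the bag has core number equal to $\core{\mcoreg{\ell}}$. Given that invariant, the proof reduces to the two-line chain of inequalities above, and no further technical work is needed. I would close by noting that, as remarked in the paper, the same $k\cdot 2$ argument transfers to any algorithm that computes a $2$-approximation of the single densest subgraph and iterates, so the bound is not specific to our construction.
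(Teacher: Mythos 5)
Your proof is correct and follows essentially the same route as the paper's: upper-bound the optimum by $k$ times the density of the single densest subgraph, lower-bound the bag's output by half that density via the main-core invariant and Lemma~\ref{lema:2-approximation}, and combine. You simply spell out the invariant-citation and sorting details that the paper's own proof leaves implicit.
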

\begin{proof}
We know that the graph does not contain any subgraph with density greater than the optimal density ($\density{\densest}$).
Therefore, we know that the sum of densities for the top-$k$ densest-subgraph problem is upper bounded by $k\times \density{\densest}$.

Further, the bag contains \estimatedensity, which provides a 2-approximation of the densest subgraph.
This implies that the sum of densities of top-$k$ densest subgraph in the bag $\geq \frac{1}{2}\density{\densest}$.
Putting above two observations together, we can clearly see that the bag provides a solution that is a $2k$-approximation for the problem.
\end{proof}

\begin{figure}[t]
\begin{subfigure}{0.48\columnwidth}
\centering 
  \begin{tikzpicture} 

\tikzstyle{vertex} = [font=\small, inner sep = 1pt,  minimum width=6pt, 
	thick, circle, text=black!90, draw=yafcolor5!90, fill=yafcolor5!05]
\tikzstyle{line} = [thick, yafcolor5!90]

\tikzset{
  text style/.style={text=black!70, font=\footnotesize}
}

\node[vertex] (v1) at (1.1,1) {};
\node[vertex] (v2) at (1.55,1.55) {};
\node[vertex] (v3) at (1.35,2.15) {};
\node[vertex] (v4) at (0.65,2.1) {};
\node[vertex] (v5) at (0.5,1.5) {};

\node[vertex] (v6) at (0.0,1.1) {};
\node[vertex] (v7) at (-0.1,1.8) {};
\node[vertex] (v8) at (0.1,2.5) {};

\node[vertex] (v9) at (2.7,0.75) {};
\node[vertex] (v10) at (2.65,1.4) {};
\node[vertex] (v11) at (2.1,1.45) {};
\node[vertex] (v12) at (2.05,0.8) {};

\node[vertex] (v13) at (2.9,2.15) {};
\node[vertex] (v14) at (2.7,2.8) {};
\node[vertex] (v15) at (2.15,2.75) {};
\node[vertex] (v16) at (2.1,2.1) {};

\node[vertex] (v17) at (1.55,2.9) {};
\node[vertex] (v18) at (2.2,3.4) {};

\draw (v1) edge [line, -, bend right = 20] (v2);
\draw (v1) edge [line, -, bend right = 20] (v4);
\draw (v1) edge [line, -, bend left = 20] (v5);
\draw (v2) edge [line, -, bend right = 20] (v3);
\draw (v2) edge [line, -, bend right = 20] (v4);
\draw (v2) edge [line, -, bend right = 20] (v5);
\draw (v3) edge [line, -, bend right = 20] (v4);
\draw (v3) edge [line, -, bend right = 20] (v5);
\draw (v4) edge [line, -, bend right = 20] (v5);

\draw (v5) edge [line, -, bend left = 20] (v6);
\draw (v5) edge [line, -, bend right = 20] (v7);
\draw (v6) edge [line, -, bend left = 20] (v7);

\draw (v4) edge [line, -, bend right = 20] (v8);

\draw (v9) edge [line, -, bend right = 20] (v10);
\draw (v9) edge [line, yafcolor4!90, -, bend left = 0] (v11);
\draw (v9) edge [line, -, bend left = 20] (v12);
\draw (v10) edge [line, -, bend right = 20] (v11);
\draw (v10) edge [line, -, bend left = 20] (v12);
\draw (v11) edge [line, -, bend right = 20] (v12);

\draw (v2) edge [line, -, bend right = 20] (v11);
\draw (v11) edge [line, -, bend right = 20] (v16);

\draw (v13) edge [line, -, bend right = 20] (v14);
\draw (v13) edge [line, -, bend left = 20] (v15);
\draw (v13) edge [line, -, bend left = 20] (v16);
\draw (v14) edge [line, -, bend right = 20] (v15);
\draw (v14) edge [line, -, bend right = 20] (v16);
\draw (v15) edge [line, -, bend right = 20] (v16);

\draw (v15) edge [line, -, bend right = 20] (v17);
\draw (v15) edge [line, -, bend left = 20] (v18);

\end{tikzpicture}
  \caption{}
  \label{fig:sample-top-ka}
\end{subfigure}%
\begin{subfigure}{0.48\columnwidth}
  \centering
  \begin{tikzpicture} 

\tikzstyle{vertex} = [font=\small, inner sep = 1pt,  minimum width=6pt, 
	thick, circle, text=black!90, draw=yafcolor5!90, fill=yafcolor5!05]
\tikzstyle{line} = [thick, yafcolor5!90]

\tikzset{
  text style/.style={text=black!70, font=\footnotesize}
}

\node[vertex] (v1) at (1.1,1) {};
\node[vertex] (v2) at (1.55,1.55) {};
\node[vertex] (v3) at (1.35,2.15) {};
\node[vertex] (v4) at (0.65,2.1) {};
\node[vertex] (v5) at (0.5,1.5) {};

\node[vertex] (v6) at (0.0,1.1) {};
\node[vertex] (v7) at (-0.1,1.8) {};

\node[vertex, draw=yafcolor4!90, fill=yafcolor4!05] (v9) at (2.7,0.75) {};
\node[vertex] (v10) at (2.65,1.4) {};
\node[vertex, draw=yafcolor4!90, fill=yafcolor4!05] (v11) at (2.1,1.45) {};
\node[vertex] (v12) at (2.05,0.8) {};

\node[vertex] (v13) at (2.9,2.15) {};
\node[vertex] (v14) at (2.7,2.8) {};
\node[vertex] (v15) at (2.15,2.75) {};
\node[vertex] (v16) at (2.1,2.1) {};


\draw (v1) edge [line, -, bend right = 20] (v2);
\draw (v1) edge [line, -, bend right = 20] (v4);
\draw (v1) edge [line, -, bend left = 20] (v5);
\draw (v2) edge [line, -, bend right = 20] (v3);
\draw (v2) edge [line, -, bend right = 20] (v4);
\draw (v2) edge [line, -, bend right = 20] (v5);
\draw (v3) edge [line, -, bend right = 20] (v4);
\draw (v3) edge [line, -, bend right = 20] (v5);
\draw (v4) edge [line, -, bend right = 20] (v5);

\draw (v6) edge [line, -, bend left = 20] (v7);


\draw (v9) edge [line, -, bend right = 20] (v10);
\draw (v9) edge [line, yafcolor4!90, -, bend left = 0] (v11);
\draw (v9) edge [line, -, bend left = 20] (v12);
\draw (v10) edge [line, -, bend right = 20] (v11);
\draw (v10) edge [line, -, bend left = 20] (v12);
\draw (v11) edge [line, -, bend right = 20] (v12);


\draw (v13) edge [line, -, bend right = 20] (v14);
\draw (v13) edge [line, -, bend left = 20] (v15);
\draw (v13) edge [line, -, bend left = 20] (v16);
\draw (v14) edge [line, -, bend right = 20] (v15);
\draw (v14) edge [line, -, bend right = 20] (v16);
\draw (v15) edge [line, -, bend right = 20] (v16);


\end{tikzpicture}
  \caption{}
  \label{fig:sample-top-kb}
  	\vspace{-\baselineskip}
\end{subfigure}
\caption{
Example showing top-$k$ version of the algorithm.
The densities of top-3 subgraphs, before the arrival of the new edge in the bag are $1.8$, $1.5$ and $1.25$.
The algorithm stores all the vertices in the bag with degree greater than equal to $1.25$, as shown in Figure \ref{fig:sample-top-kb}.
After the arrival of the new edge, the update only affects one of the subgraphs in the bag.}
\label{fig:sample-figure-top-k}
\vspace{-\baselineskip}
\end{figure}
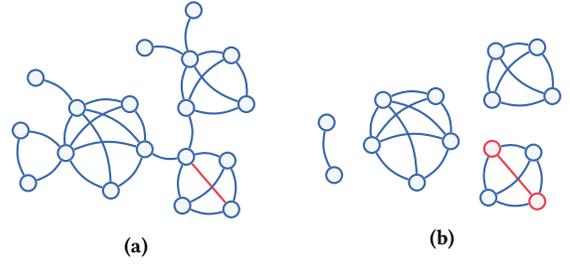
\subsection{Data structure}
\label{alg: data_structure}
The proposed algorithm requires accessing the neighborhood information of every node. 
Specifically, we are interested in performing three queries on a given vertex: a) extract its degree, b) extract all its neighbors, and c) given density $\estimatedensity$, extract all the vertices with degrees greater than the density $\estimatedensity$.

\spara{Vertex map:} To answer the first two queries, we need to store the neighborhood information for all vertices.
We store the information in a hashmap with keys being the vertex identifiers and values being the neighbors of each vertex.
Vertex map allows performing search and update operations in amortized constant time. 

\spara{Degree table:} For the third query, we need to order the vertices by their degrees.
We use bin sort to order the vertices by their degrees, which enables extracting the vertices with degrees greater than the density $\estimatedensity$ in constant time.
\section{Discussion}\label{sec:discussion}

Most of the solutions that leverage a static algorithms~\citep{charikar2000greedy,batagelj2003,bahmani2012densest} require iterating over the entire graph k times upon any update.
Comparatively, our algorithm mostly touches a limited region in the graph for any updates, which makes our algorithm perform significantly faster. 

The top-k densest subgraph algorithm that adapts to \citet{epasto2015efficient} requires replicating the graph across the $k$ instances.
In addition, updates across k instances might cascade and require running the algorithm k times, similarly to the static top-k solutions.
Our algorithm outperforms the pipeline version of the top-k algorithm both in terms of memory and computation. 

Finally, incremental algorithms for k-core maintenance requires removing top-k densest subgraph upon every update in the graph.
Each removal of a densest subgraph requires updating the core number of the remaining vertices.
Our algorithm efficiently avoids this removal step by maintaining the subgraphs during execution.

\spara{Performance.}
The proposed algorithm leverages the skew in real-world graphs and ignores a major portion of the input stream on the fly.
In addition, high-degree vertices are stored as small subgraphs so that each update is often applied  locally on these subgraphs.
This design allows our algorithm to operate on just small portions of the graph for each update, rather than iterating on the whole graph.
Finally, we use the k-core algorithm~\cite{sariyuce2013streaming,li2014efficient}, which allows each high-degree node to update their core number with a complexity independent of the graph size.
These improvements enable our algorithm to perform significantly better than the other state-of-the-art streaming algorithms.

\spara{Memory.} Our algorithm requires only $O(n^2 \polylog n)$ memory, compared to $O(k n^2 \polylog n)$ for the top-k version of \cite{epasto2015efficient} and $O(n^2 \polylog n)$ for the other algorithms discussed in Section~\ref{subsection: topk}. 

\spara{Tight Bounds.} We show via an example that any algorithm for densest subgraph problem can only produce a $k$-approximation for top-k densest subgraph problem. 
Consider a graph $G$ that contains $n\cdot m$ vertices, for any $n \geq k$.
The $n$ nodes connect to form a circle.
In the circle, there is one edge connecting two non-adjacent vertices. 
Additionally, each of the $n$ nodes connects to exactly $m$ neighbors.
The inner circle of the graph $G$ is the densest subgraph.
Removing the densest subgraph leaves the rest of the vertices completely disconnected.
Hence, the sum of density will be equal to one.
Now, consider the case when the densest subgraph is not removed first. 
In this case, each node in the circle along with its $m$ neighbors create a subgraph of density almost equal to one, for higher values of $m$.
Therefore, one can return $k$ such subgraphs with the sum of densities equal to $k$, which is $k$ times better than the previous case.

\section{Evaluation}
\label{sec:evaluation}
We conduct an extensive empirical evaluation of the proposed algorithm, and provide comparisons with the existing solutions.
In particular, we answer the following questions:
\begin{squishlist}
\item[\textbf{Q1:}]
What is the impact on the quality of subgraphs?
\item[\textbf{Q2:}]
What are the gains in performance?
\item[\textbf{Q3:}]
How does the algorithm perform in terms of different input parameters?
\end{squishlist}
\subsection{Experimental setup}
\spara{Datasets.}
Table~\ref{tab:summary-datasets} shows the datasets used in the experiments.
The datasets are selected due to their public availability. 
We evaluate all the algorithms in the sliding window model.
The number of edges in the sliding window is an input parameter.

\begin{table}[t]
\centering
\caption{Datasets used in the experiments.} 
\small
\begin{tabular}{l l r r r}
\toprule
Dataset		&	Symbol	&	$n$	&	$m$ & $\overline{\degree{v}}$	\\ 
\midrule
Amazon \cite{snapnets}		& AM		& \num{334863}	& \num{925872} & 5.52 \\
DBLP 1 \cite{snapnets}		&  DB$_1$   & \num{317080} &   \num{1049866} & 6.62 \\
Youtube \cite{snapnets} & YT & \num{1134890} & \num{2987624} & 5.26 \\
DBLP 2\tablefootnote{\url{http://konect.uni-koblenz.de/networks/dblp_coauthor}}			&	DB$_2$	& \num{1314050}  & 	\num{18986618} & 28.88  \\
Live Journal\tablefootnote{\url{http://konect.uni-koblenz.de/networks/livejournal-links}}  & LJ		&	\num{5204176} & \num{49174620}  & 18.90 \\
Orkut \cite{snapnets} & OT & \num{3072441} & \num{117185083} & 76.28 \\
Friendster	\cite{snapnets}	&	FR	&		\num{65608366} & \num{1806067135} & 55.04 \\
\bottomrule
\end{tabular}
\label{tab:summary-datasets}
\vspace{-\baselineskip}
\end{table}

\spara{Metrics.}
We evaluate the quality and efficiency of the algorithms.
We assess the quality by the objective function, i.e., the sum of densities of the subgraphs produced by an algorithm.
We evaluate the efficiency of an algorithm by reporting the average update time and the memory usage.
The average update time is the average time it takes to move the sliding window.
This includes adding the new edge, removing the oldest edge, and updating the top-$k$ densest subgraphs.
We report the memory usage as the average percentage of occupied memory.

\spara{Algorithms.}
Table \ref{tab:summary-algorithm} shows the notations used for different algorithms. 
The algorithms by \citet{bahmani2012densest} and \citet{epasto2015efficient} require an additional epsilon parameter for execution, which provides a trade-off between quality and execution time. 
Here we use the defaults proposed by the authors.


\spara{Stream ordering.} 
We consider two commonly used stream ordering schemes~\cite{stanton2012streaming}: 
\begin{squishlist}
\item{BFS:} The ordering is a result of a breadth-first search starting from a random vertex. 
\item{DFS:} The ordering is a result of a depth-first search starting from a random vertex.
\end{squishlist}

\spara{Experimental environment.}
We conduct our experiments on a machine with 2 Intel Xeon Processors E5-2698 and 128GiB of memory. 
All the algorithms are implemented in Java and executed on JRE 7 running on Linux.
The source code is available online.\footnote{\url{https://github.com/anisnasir/TopKDensestSubgraph}}

\begin{table}[t]
\caption{Notation for the top-$k$ algorithms.}
\centering
\small
\begin{tabular}{l l r r}
\toprule
Symbol & Reference Algorithm	&	Top-1 Approx. & In-place \\
& & Guarantees & \\ 
\midrule
\charikar & \citet{charikar2000greedy}	& 2 & No \\
\batagelj & \citet{batagelj2003}	& 2 &  No \\
\bahmani{\epsilon} & \citet{bahmani2012densest} & $2(1+\epsilon )$ &  No \\
\li & \citet{li2014efficient} & 2  & Yes \\ 
\sariyuce & \citet{sariyuce2013streaming} & 2  &  Yes \\ 
\epasto{\epsilon} & \citet{epasto2015efficient}& $2(1+\epsilon )^6$ &  No \\
\our & this paper	& 2 & Yes\\
\bottomrule
\end{tabular}
\label{tab:summary-algorithm}
\vspace{-\baselineskip}
\end{table}

\subsection{Experimental results}

\spara{Q1:} In this experiment, we compare the quality of the results produced by the different algorithm as measured by the objective function.
In order to be able to run most of the algorithms, we use the smaller datasets, i.e., AM, DB$_1$, YT, and DB$_2$. 
As \li and \sariyuce produce same results in terms of quality, we only report the results for one of them.
Due to space constraints, we show the results only with the DFS ordering.
However, we achieve similar results also with the BFS one.
We set the size of the sliding window $x = 100$k.

\begin{figure*}[t]
    \centering
  		\includegraphics[width=\textwidth]{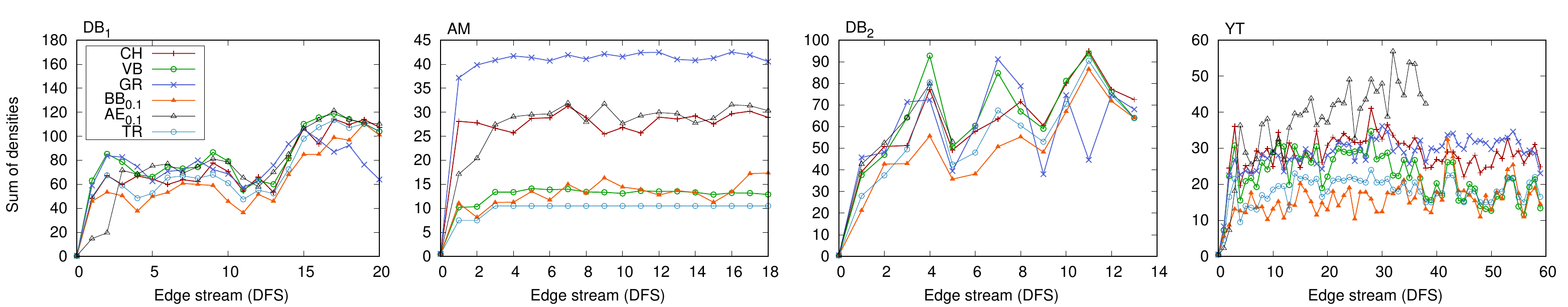}
 	\caption{Sum of densities for top-10 densest subgraphs on the DB$_1$, AM, DB$_2$ and YT datasets, sliding window size 100k.}
	\label{fig:eval-sum-densities}
	\vspace{-\baselineskip}
\end{figure*}

For the static algorithms, we execute them in micro-batches in which the top-$k$ densest subgraph is recomputed after $1$k edge updates, which gives them a substantial advantage.
For \bahmani{\epsilon} and \epasto{\epsilon}, we use $\epsilon = 0.01$. 
We set the maximum execution time to 7 days, due to which the \epasto{\epsilon} is not able to finish on DB$_2$ and YT.
Finally, \li and \sariyuce contain an update phase due to the iterative extraction of the top-$k$ densest subgraph. 
This phase is expensive, as the main core consists of high-degree vertices.
To alleviate this issue, we execute the extraction phase every $10$k update operations.

Figure \ref{fig:eval-sum-densities} shows the quality results.
Our algorithm \our achieves competitive quality, often generating denser subgraphs compared to all the other algorithms.
For example, for the AM dataset, \our produces subgraphs that are $1.5$ times better than the best algorithm among the state-of-the-art solutions.
All the other algorithms produce consistent results across all the datasets.
For instance, \batagelj , \li, and \sariyuce produce top-$k$ dense subgraphs of lowest quality compared to the other algorithms. 
This result is caused by the removal of the main core (backbone) of the graph in each of the $k$ iterations.
In addition, it shows how the problem considered in this paper, while related, is different from a simple k-core decomposition.
The results also validate that \bahmani{\epsilon} and \epasto{\epsilon} provide weaker guarantees on the quality compared to \charikar.

\spara{Q2:} In this experiment, we turn our attention to evaluate the efficiency of the proposed algorithm, as measured by the average update time and memory usage.
We again select several datasets, i.e.,  AM, DB$_1$, YT, DB$_2$, and LJ, and execute \charikar, \batagelj, \bahmani, \li , \sariyuce, \epasto, and \our.
We exclude the largest datasets as only a few algorithms are able to handle them.
The algorithms are executed with both BFS and DFS ordering of the edge stream.
The sliding window size is set to $x=100$k edges, and $k=10$ unless otherwise specified.

Figure \ref{fig:eval-performance} shows the efficiency results (note the logarithmic scale). 
The proposed algorithm, \our, outperforms all other ones in terms of update time for all the datasets and both ordering schemes.
\li and \sariyuce are the slowest algorithms.
In particular, for DFS, our algorithm achieves a performance gain of five orders of magnitude.
This result is noteworthy as even though our algorithm is dependent on core decomposition, it is still able to beat a na\"{i}ve application of those algorithms by a wide margin.
This difference is due to the fact that both algorithms require maintaining the core number for all the vertices.
Additionally, the extraction of top-$k$ densest subgraph further hampers their efficiency.

Algorithms 
\charikar, \batagelj, and \bahmani{0.01} perform unfavorably due to their static nature.
In this case, our algorithm is able to achieve more than three orders of magnitude improvement in efficiency.

Algorithm 
\epasto{0.01} is best performing among the baselines, 
and even outperforms \our in one dataset for one specific ordering (AM with BFS).
However, \our still outperforms \epasto{0.01} by nearly three orders of magnitude in most other cases.

Finally, we observe the overall memory consumption of all the algorithms (reported in Table \ref{tab:memory}). 
The memory requirement of our algorithm lies between the static and the dynamic algorithms, i.e., \epasto{0.01} requires the largest amount of memory, while the static algorithms require the least.
These results are in line with our expectations from the discussion in Section~\ref{sec:background}.

\begin{table}[tb]
\centering
\caption{Memory consumption as a percentage of total memory for algorithms with DB$_2$ and LJ dataset, sliding window size 100k.}
\small
\begin{tabular}{l c c c c c c c}
\toprule
		&	\charikar & \batagelj & \bahmani{\epsilon} &  \li  & \sariyuce & \epasto{\epsilon} & \our\\ 
\midrule
DB$_2$  &  2.10 & 2.10 & 2.10 & 2.1 & 2.8 & 5 & 2.7 \\
LJ & 1.80 & 1.80 & 1.80 &1.8 &2.1 & 5.5 & 2.4\\
\bottomrule
\end{tabular}
\label{tab:memory}
\end{table}

\begin{figure}[t]
    \centering
  		\includegraphics[width=1.05\columnwidth]{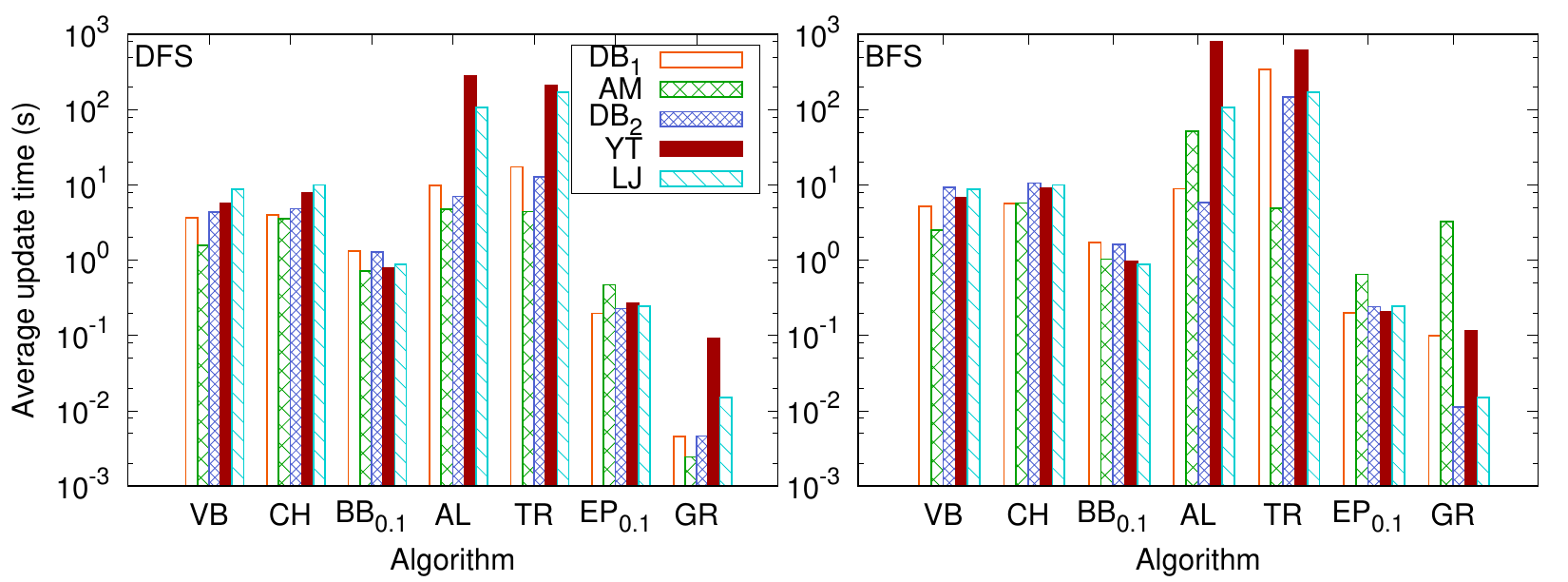}
 	\caption{Update time for the algorithms on the DB$_1$, AM, DB$_2$, YT, and LJ datasets, when using both DFS and BFS ordering and $k=10$.}
	\label{fig:eval-performance}
	\vspace{-\baselineskip}
\end{figure}

\spara{Q3:} In this experiment, we evaluate the scalability of our algorithm.
First, we execute \our with different values of $k$.
Alongside, we report the average update time for \bahmani{\epsilon} algorithm.
We choose \bahmani{\epsilon}, rather than \epasto{\epsilon}, as it can execute in mirco-batches, i.e., top-$k$ densest subgraph every $100$ edges.
We set the sliding window size $x=1$M, and use the YT and DB$_2$ dataset with DFS ordering.
Figure \ref{fig:eval-scal-1} reports the average update time for both algorithms. 
The average update time of \our remains consistent even for higher values of $k$, whereas the execution time of \bahmani{0.01} increases with the parameter.



\begin{figure}[t]
    \centering
  		\includegraphics[width=\columnwidth]{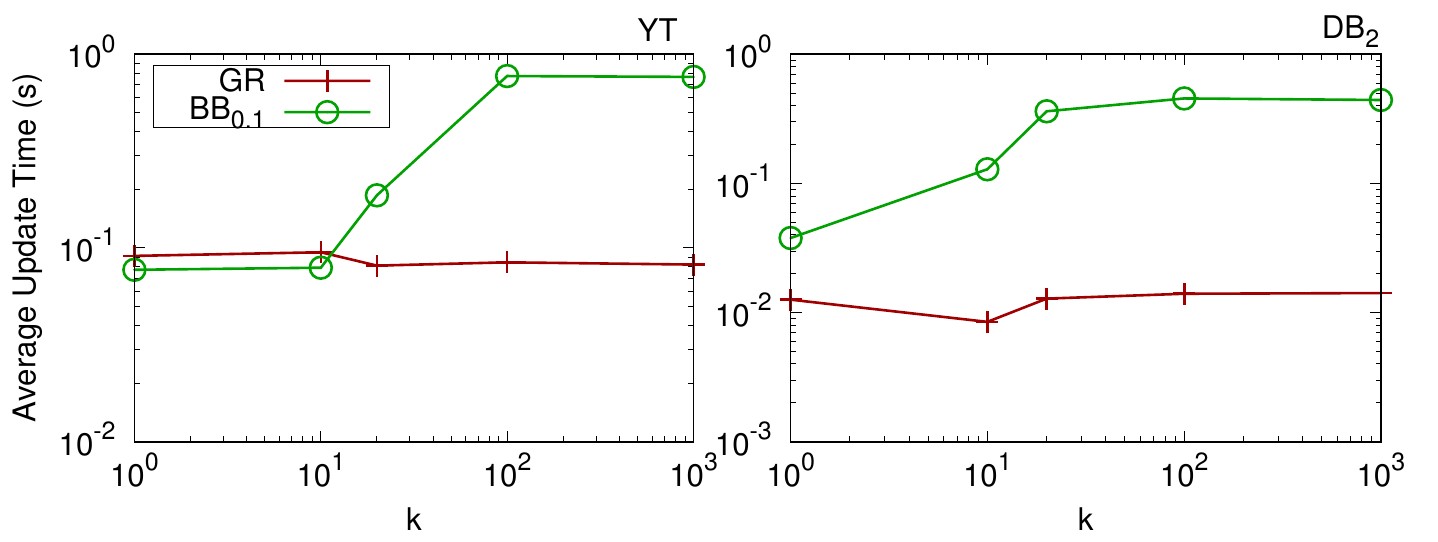}
 	\caption{Update time for \our and \bahmani{0.01} on the YT and DB$_2$ datasets as a function of $k$. Note that \bahmani{0.01} is executed in micro-batches of size $100$.}
	\label{fig:eval-scal-1}
	\vspace{-\baselineskip}
\end{figure}


Second, we execute \our with different sizes of sliding window, i.e., $x=10$k, $100$k, $1$M, and $10$M. We set $k=10$ for this experiment.
Again, we use the YT and DB$_2$ datasets with DFS ordering. 
Table~\ref{tab:scal-2} reports the average update time for different configurations. 
Increasing the size of the sliding window does not affect the average update time significantly.
This result validates our claim that most of the updates are local to the subgraphs, and do not require iterating through the whole graph to extract the top-$k$ densest subgraphs.

\begin{table}[tb]
\centering
\caption{Average update time for the \our algorithm with sliding windows of different sizes $x$.}
\small
\begin{tabular}{l c c c c}
\toprule
		&	$x=10$k 	&	$x=100$k	&	$x=1$M & $x=10$M	\\ 
\midrule
YT & 0.80ms & 91.14ms & 90.33ms & 95.49ms\\
DB$_2$  & 4.97ms & 7.58ms & 8.45ms & 32.21ms \\ 
\bottomrule
\end{tabular}
\label{tab:scal-2}
\vspace{-\baselineskip}
\end{table}


Finally, we study the performance of our algorithm in on the largest datasets.
We select the OT and FR datasets with DFS ordering, and execute the algorithm with a sliding window of $x=100$k, with $k=10$.
Figure \ref{fig:eval-scal-3} shows the result of the experiment. 
The plot is generated by taking the moving average of the update time and the sum of densities. 
The average update time of the algorithm mostly remains constant throughout the execution, and our algorithm provides steady efficiency over the stream. 
This behavior remains consistent even when the densities are fluctuating, as in the case for the OT dataset. 

\begin{figure}[t]
    \centering
  		\includegraphics[width=\columnwidth]{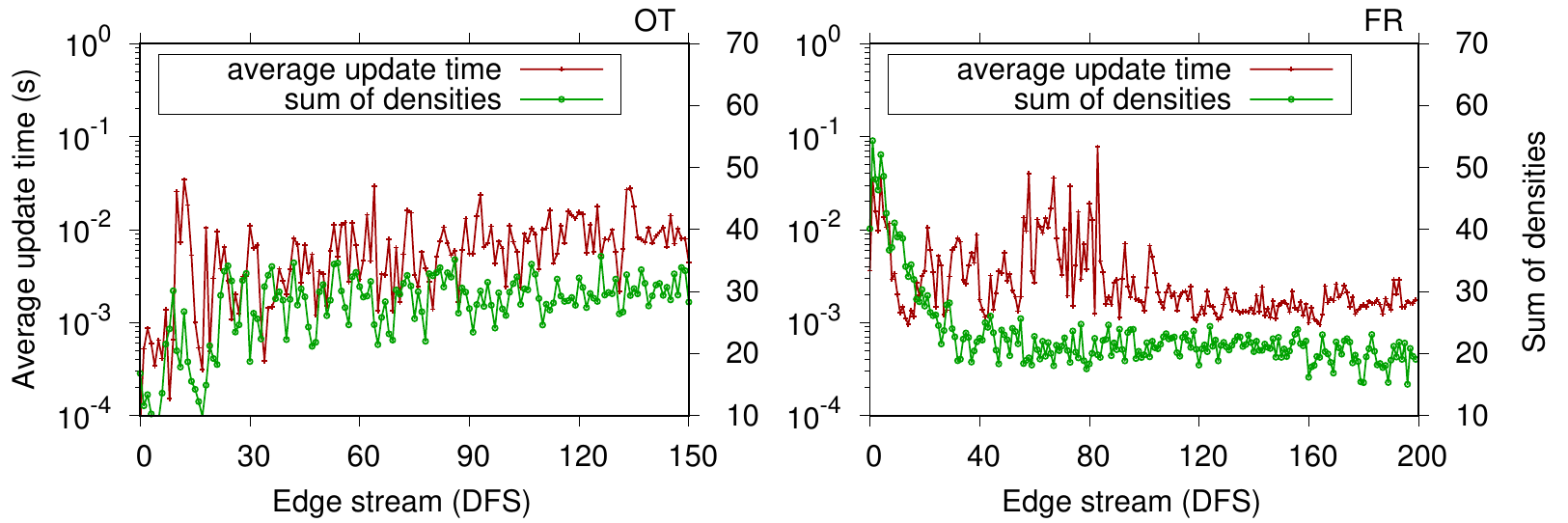}
 	\caption{Quality and efficiency for \our on the OT and FR datasets over the stream.}
	\label{fig:eval-scal-3}
	\vspace{-\baselineskip}
\end{figure}

\section{Related work}
\label{sec:rel-work}

\citet{valari2012discovery} were the first one to study the top-$k$ densest subgraph problem for a stream consisting of a dynamic collection of graphs.
They proposed both an exact and an approximation algorithm for top-$k$ densest subgraph discovery.
Similar to our algorithm, the proposed algorithm relies on the core decomposition to provide the approximation guarantees.
The top-$k$ densest subgraphs produced by the algorithm are edge-disjoint.
\citet{balalau2015finding} studied the problem of finding the top-$k$ densest subgraph with limited overlap.
They defined the top-$k$ densest subgraphs as a set of k subgraphs that maximizes the sum of densities, while satisfying an upper bound on the pairwise Jaccard coefficient between the set of vertices of the subgraphs.
The problem of finding the top-$k$ densest subgraph as sum of densities was shown to be NP-hard \cite{balalau2015finding} and efficient heuristic was proposed to solve the problem.
Further, \citet{galbrun2016top} studied the problem of finding the top-$k$ overlapping densest subgraphs and provided constant-factor approximation guarantees. 

\section{Conclusion}
We studied the top-$k$ densest subgraphs problem for graph streams, and proposed an efficient one-pass fully-dynamic algorithm.
In contrast to the existing state-of-the-art solutions that require iterating over the entire graph upon update, our algorithm maintains the solution in one-pass.
Additionally, the memory requirement of the algorithm is independent of $k$.
The algorithm is designed by leveraging the observation that graph updates only affect a limited region.
Therefore, the top-$k$ densest subgraphs are maintained by simply applying local updates to small subgraphs, rather than the complete graph.
We provided a theoretical analysis of the proposed algorithm and showed empirically that the algorithm often generates denser subgraphs than the state-of-the-art solutions. 
Further, we observed an improvement in performance of up to five orders of magnitude when compared to the baselines. 

This work gives rise to further interesting research questions:
Is it necessary to leverage k-core decomposition algorithm as a backbone?
Is it possible to achieve stronger bounds on the threshold for high-degree vertices?
Can we design an algorithm with a space bound on the size of the bag?
Is it possible to achieve stronger approximation guarantees for the problem?
We believe that solving these questions will further enhance the proposed algorithm, making it a useful tool for numerous practical applications.






\balance
\bibliographystyle{ACM-Reference-Format}
\bibliography{biblio}


\begin{thebibliography}{00}


\ifx \showCODEN    \undefined \def \showCODEN     #1{\unskip}     \fi
\ifx \showDOI      \undefined \def \showDOI       #1{{\tt DOI:}\penalty0{#1}\ }
  \fi
\ifx \showISBNx    \undefined \def \showISBNx     #1{\unskip}     \fi
\ifx \showISBNxiii \undefined \def \showISBNxiii  #1{\unskip}     \fi
\ifx \showISSN     \undefined \def \showISSN      #1{\unskip}     \fi
\ifx \showLCCN     \undefined \def \showLCCN      #1{\unskip}     \fi
\ifx \shownote     \undefined \def \shownote      #1{#1}          \fi
\ifx \showarticletitle \undefined \def \showarticletitle #1{#1}   \fi
\ifx \showURL      \undefined \def \showURL       #1{#1}          \fi
\providecommand\bibfield[2]{#2}
\providecommand\bibinfo[2]{#2}
\providecommand\natexlab[1]{#1}
\providecommand\showeprint[2][]{arXiv:#2}

\bibitem[\protect\citeauthoryear{Akiba, Iwata, and Yoshida}{Akiba
  et~al\mbox{.}}{2013}]%
        {akiba2013fast}
\bibfield{author}{\bibinfo{person}{Takuya Akiba}, \bibinfo{person}{Yoichi
  Iwata}, {and} \bibinfo{person}{Yuichi Yoshida}.}
  \bibinfo{year}{2013}\natexlab{}.
\newblock \showarticletitle{Fast exact shortest-path distance queries on large
  networks by pruned landmark labeling}. In \bibinfo{booktitle}{{\em SIGMOD}}.
  ACM, \bibinfo{pages}{349--360}.
\newblock


\bibitem[\protect\citeauthoryear{Angel, Sarkas, Koudas, and Srivastava}{Angel
  et~al\mbox{.}}{2012}]%
        {angel2012dense}
\bibfield{author}{\bibinfo{person}{Albert Angel}, \bibinfo{person}{Nikos
  Sarkas}, \bibinfo{person}{Nick Koudas}, {and} \bibinfo{person}{Divesh
  Srivastava}.} \bibinfo{year}{2012}\natexlab{}.
\newblock \showarticletitle{Dense subgraph maintenance under streaming edge
  weight updates for real-time story identification}.
\newblock \bibinfo{journal}{{\em VLDB\/}} \bibinfo{volume}{5},
  \bibinfo{number}{6} (\bibinfo{year}{2012}), \bibinfo{pages}{574--585}.
\newblock


\bibitem[\protect\citeauthoryear{Asahiro, Iwama, Tamaki, and Tokuyama}{Asahiro
  et~al\mbox{.}}{1996}]%
        {asahiro1996greedily}
\bibfield{author}{\bibinfo{person}{Yuichi Asahiro}, \bibinfo{person}{Kazuo
  Iwama}, \bibinfo{person}{Hisao Tamaki}, {and} \bibinfo{person}{Takeshi
  Tokuyama}.} \bibinfo{year}{1996}\natexlab{}.
\newblock \showarticletitle{{Greedily finding a dense subgraph}}. In
  \bibinfo{booktitle}{{\em SWAT}}. \bibinfo{pages}{136--148}.
\newblock


\bibitem[\protect\citeauthoryear{Babcock, Babu, Datar, Motwani, and
  Widom}{Babcock et~al\mbox{.}}{2002}]%
        {babcock2002models}
\bibfield{author}{\bibinfo{person}{Brian Babcock}, \bibinfo{person}{Shivnath
  Babu}, \bibinfo{person}{Mayur Datar}, \bibinfo{person}{Rajeev Motwani}, {and}
  \bibinfo{person}{Jennifer Widom}.} \bibinfo{year}{2002}\natexlab{}.
\newblock \showarticletitle{Models and issues in data stream systems}. In
  \bibinfo{booktitle}{{\em PODS}}. ACM, \bibinfo{pages}{1--16}.
\newblock


\bibitem[\protect\citeauthoryear{Bahmani, Kumar, and Vassilvitskii}{Bahmani
  et~al\mbox{.}}{2012}]%
        {bahmani2012densest}
\bibfield{author}{\bibinfo{person}{Bahman Bahmani}, \bibinfo{person}{Ravi
  Kumar}, {and} \bibinfo{person}{Sergei Vassilvitskii}.}
  \bibinfo{year}{2012}\natexlab{}.
\newblock \showarticletitle{Densest subgraph in streaming and mapreduce}.
\newblock \bibinfo{journal}{{\em VLDB\/}} \bibinfo{volume}{5},
  \bibinfo{number}{5} (\bibinfo{year}{2012}), \bibinfo{pages}{454--465}.
\newblock


\bibitem[\protect\citeauthoryear{Balalau, Bonchi, Chan, Gullo, and
  Sozio}{Balalau et~al\mbox{.}}{2015}]%
        {balalau2015finding}
\bibfield{author}{\bibinfo{person}{Oana~Denisa Balalau},
  \bibinfo{person}{Francesco Bonchi}, \bibinfo{person}{TH Chan},
  \bibinfo{person}{Francesco Gullo}, {and} \bibinfo{person}{Mauro Sozio}.}
  \bibinfo{year}{2015}\natexlab{}.
\newblock \showarticletitle{Finding subgraphs with maximum total density and
  limited overlap}. In \bibinfo{booktitle}{{\em WSDM}}. ACM,
  \bibinfo{pages}{379--388}.
\newblock


\bibitem[\protect\citeauthoryear{Batagelj and Zaversnik}{Batagelj and
  Zaversnik}{2003}]%
        {batagelj2003}
\bibfield{author}{\bibinfo{person}{Vladimir Batagelj} {and}
  \bibinfo{person}{Matjaz Zaversnik}.} \bibinfo{year}{2003}\natexlab{}.
\newblock \showarticletitle{An O (m) algorithm for cores decomposition of
  networks}.
\newblock \bibinfo{journal}{{\em arXiv preprint cs/0310049\/}}
  (\bibinfo{year}{2003}).
\newblock


\bibitem[\protect\citeauthoryear{Beutel, Xu, Guruswami, Palow, and
  Faloutsos}{Beutel et~al\mbox{.}}{2013}]%
        {beutel2013copycatch}
\bibfield{author}{\bibinfo{person}{Alex Beutel}, \bibinfo{person}{Wanhong Xu},
  \bibinfo{person}{Venkatesan Guruswami}, \bibinfo{person}{Christopher Palow},
  {and} \bibinfo{person}{Christos Faloutsos}.} \bibinfo{year}{2013}\natexlab{}.
\newblock \showarticletitle{Copycatch: stopping group attacks by spotting
  lockstep behavior in social networks}. In \bibinfo{booktitle}{{\em WWW}}.
  \bibinfo{pages}{119--130}.
\newblock


\bibitem[\protect\citeauthoryear{Bhattacharya, Henzinger, Nanongkai, and
  Tsourakakis}{Bhattacharya et~al\mbox{.}}{2015}]%
        {bhattacharya2015space}
\bibfield{author}{\bibinfo{person}{Sayan Bhattacharya}, \bibinfo{person}{Monika
  Henzinger}, \bibinfo{person}{Danupon Nanongkai}, {and}
  \bibinfo{person}{Charalampos Tsourakakis}.} \bibinfo{year}{2015}\natexlab{}.
\newblock \showarticletitle{Space-and time-efficient algorithm for maintaining
  dense subgraphs on one-pass dynamic streams}. In \bibinfo{booktitle}{{\em
  STOC}}. ACM, \bibinfo{pages}{173--182}.
\newblock


\bibitem[\protect\citeauthoryear{Charikar}{Charikar}{2000}]%
        {charikar2000greedy}
\bibfield{author}{\bibinfo{person}{Moses Charikar}.}
  \bibinfo{year}{2000}\natexlab{}.
\newblock \showarticletitle{Greedy approximation algorithms for finding dense
  components in a graph}. In \bibinfo{booktitle}{{\em Approx. Algo. for Comb.
  Opt.}} \bibinfo{pages}{84--95}.
\newblock


\bibitem[\protect\citeauthoryear{Chen and Saad}{Chen and Saad}{2012}]%
        {chen2012dense}
\bibfield{author}{\bibinfo{person}{Jie Chen} {and} \bibinfo{person}{Yousef
  Saad}.} \bibinfo{year}{2012}\natexlab{}.
\newblock \showarticletitle{Dense subgraph extraction with application to
  community detection}.
\newblock \bibinfo{journal}{{\em TKDE\/}} \bibinfo{volume}{24},
  \bibinfo{number}{7} (\bibinfo{year}{2012}), \bibinfo{pages}{1216--1230}.
\newblock


\bibitem[\protect\citeauthoryear{Crouch, McGregor, and Stubbs}{Crouch
  et~al\mbox{.}}{2013}]%
        {crouch2013dynamic}
\bibfield{author}{\bibinfo{person}{Michael~S Crouch}, \bibinfo{person}{Andrew
  McGregor}, {and} \bibinfo{person}{Daniel Stubbs}.}
  \bibinfo{year}{2013}\natexlab{}.
\newblock \showarticletitle{Dynamic graphs in the sliding-window model}. In
  \bibinfo{booktitle}{{\em ESA}}. Springer, \bibinfo{pages}{337--348}.
\newblock


\bibitem[\protect\citeauthoryear{Datar, Gionis, Indyk, and Motwani}{Datar
  et~al\mbox{.}}{2002}]%
        {datar2002maintaining}
\bibfield{author}{\bibinfo{person}{Mayur Datar}, \bibinfo{person}{Aristides
  Gionis}, \bibinfo{person}{Piotr Indyk}, {and} \bibinfo{person}{Rajeev
  Motwani}.} \bibinfo{year}{2002}\natexlab{}.
\newblock \showarticletitle{Maintaining stream statistics over sliding
  windows}.
\newblock \bibinfo{journal}{{\it SIAM J. Comput.}} \bibinfo{volume}{31},
  \bibinfo{number}{6} (\bibinfo{year}{2002}), \bibinfo{pages}{1794--1813}.
\newblock


\bibitem[\protect\citeauthoryear{Dourisboure, Geraci, and
  Pellegrini}{Dourisboure et~al\mbox{.}}{2007}]%
        {dourisboure2007extraction}
\bibfield{author}{\bibinfo{person}{Yon Dourisboure}, \bibinfo{person}{Filippo
  Geraci}, {and} \bibinfo{person}{Marco Pellegrini}.}
  \bibinfo{year}{2007}\natexlab{}.
\newblock \showarticletitle{Extraction and classification of dense communities
  in the web}. In \bibinfo{booktitle}{{\em WWW}}. ACM,
  \bibinfo{pages}{461--470}.
\newblock


\bibitem[\protect\citeauthoryear{Epasto, Lattanzi, and Sozio}{Epasto
  et~al\mbox{.}}{2015}]%
        {epasto2015efficient}
\bibfield{author}{\bibinfo{person}{Alessandro Epasto}, \bibinfo{person}{Silvio
  Lattanzi}, {and} \bibinfo{person}{Mauro Sozio}.}
  \bibinfo{year}{2015}\natexlab{}.
\newblock \showarticletitle{Efficient Densest Subgraph Computation in Evolving
  Graphs}. In \bibinfo{booktitle}{{\em WWW}}. \bibinfo{pages}{300--310}.
\newblock


\bibitem[\protect\citeauthoryear{Eppstein, Galil, and Italiano}{Eppstein
  et~al\mbox{.}}{1998}]%
        {eppstein1998dynamic}
\bibfield{author}{\bibinfo{person}{David Eppstein}, \bibinfo{person}{Zvi
  Galil}, {and} \bibinfo{person}{Giuseppe~F Italiano}.}
  \bibinfo{year}{1998}\natexlab{}.
\newblock \bibinfo{booktitle}{{\em Dynamic graph algorithms}}.
\newblock \bibinfo{publisher}{Springer}.
\newblock


\bibitem[\protect\citeauthoryear{Galbrun, Gionis, and Tatti}{Galbrun
  et~al\mbox{.}}{2016}]%
        {galbrun2016top}
\bibfield{author}{\bibinfo{person}{Esther Galbrun}, \bibinfo{person}{Aristides
  Gionis}, {and} \bibinfo{person}{Nikolaj Tatti}.}
  \bibinfo{year}{2016}\natexlab{}.
\newblock \showarticletitle{Top-$k$ overlapping densest subgraphs}.
\newblock \bibinfo{journal}{{\em Data Mining and Knowledge Discovery\/}}
  (\bibinfo{year}{2016}), \bibinfo{pages}{1--32}.
\newblock


\bibitem[\protect\citeauthoryear{Gibson, Kumar, and Tomkins}{Gibson
  et~al\mbox{.}}{2005}]%
        {gibson2005discovering}
\bibfield{author}{\bibinfo{person}{David Gibson}, \bibinfo{person}{Ravi Kumar},
  {and} \bibinfo{person}{Andrew Tomkins}.} \bibinfo{year}{2005}\natexlab{}.
\newblock \showarticletitle{Discovering large dense subgraphs in massive
  graphs}. In \bibinfo{booktitle}{{\em VLDB}}. VLDB Endowment,
  \bibinfo{pages}{721--732}.
\newblock


\bibitem[\protect\citeauthoryear{Gionis and Tsourakakis}{Gionis and
  Tsourakakis}{2015}]%
        {gionis2015dense}
\bibfield{author}{\bibinfo{person}{Aristides Gionis} {and}
  \bibinfo{person}{Charalampos~E Tsourakakis}.}
  \bibinfo{year}{2015}\natexlab{}.
\newblock \showarticletitle{Dense subgraph discovery: Kdd 2015 tutorial}. In
  \bibinfo{booktitle}{{\em SIGKDD}}. ACM, \bibinfo{pages}{2313--2314}.
\newblock


\bibitem[\protect\citeauthoryear{Goldberg}{Goldberg}{1984}]%
        {goldberg1984finding}
\bibfield{author}{\bibinfo{person}{Andrew~V Goldberg}.}
  \bibinfo{year}{1984}\natexlab{}.
\newblock \bibinfo{booktitle}{{\em Finding a maximum density subgraph}}.
\newblock \bibinfo{publisher}{University of California Berkeley, CA}.
\newblock


\bibitem[\protect\citeauthoryear{Kortsarz and Peleg}{Kortsarz and
  Peleg}{1994}]%
        {kortsarz1994generating}
\bibfield{author}{\bibinfo{person}{Guy Kortsarz} {and} \bibinfo{person}{David
  Peleg}.} \bibinfo{year}{1994}\natexlab{}.
\newblock \showarticletitle{Generating sparse 2-spanners}.
\newblock \bibinfo{journal}{{\em Journal of Algorithms\/}}
  \bibinfo{volume}{17}, \bibinfo{number}{2} (\bibinfo{year}{1994}),
  \bibinfo{pages}{222--236}.
\newblock


\bibitem[\protect\citeauthoryear{Leskovec and Krevl}{Leskovec and
  Krevl}{2014}]%
        {snapnets}
\bibfield{author}{\bibinfo{person}{Jure Leskovec} {and} \bibinfo{person}{Andrej
  Krevl}.} \bibinfo{year}{2014}\natexlab{}.
\newblock \bibinfo{title}{{SNAP Datasets}: {Stanford} Large Network Dataset
  Collection}.
\newblock \bibinfo{howpublished}{\url{http://snap.stanford.edu/data}}.
  (\bibinfo{date}{June} \bibinfo{year}{2014}).
\newblock


\bibitem[\protect\citeauthoryear{Li, Yu, and Mao}{Li et~al\mbox{.}}{2014}]%
        {li2014efficient}
\bibfield{author}{\bibinfo{person}{Rong-Hua Li}, \bibinfo{person}{Jeffrey~Xu
  Yu}, {and} \bibinfo{person}{Rui Mao}.} \bibinfo{year}{2014}\natexlab{}.
\newblock \showarticletitle{Efficient core maintenance in large dynamic
  graphs}.
\newblock \bibinfo{journal}{{\em TKDE\/}} \bibinfo{volume}{26},
  \bibinfo{number}{10} (\bibinfo{year}{2014}), \bibinfo{pages}{2453--2465}.
\newblock


\bibitem[\protect\citeauthoryear{McGregor, Tench, Vorotnikova, and Vu}{McGregor
  et~al\mbox{.}}{2015}]%
        {mcgregor2015densest}
\bibfield{author}{\bibinfo{person}{Andrew McGregor}, \bibinfo{person}{David
  Tench}, \bibinfo{person}{Sofya Vorotnikova}, {and} \bibinfo{person}{Hoa~T
  Vu}.} \bibinfo{year}{2015}\natexlab{}.
\newblock \showarticletitle{Densest Subgraph in Dynamic Graph Streams}.
\newblock In \bibinfo{booktitle}{{\em MFCS}}. \bibinfo{publisher}{Springer},
  \bibinfo{pages}{472--482}.
\newblock


\bibitem[\protect\citeauthoryear{Orlin}{Orlin}{2013}]%
        {orlin2013maxflows}
\bibfield{author}{\bibinfo{person}{James Orlin}.}
  \bibinfo{year}{2013}\natexlab{}.
\newblock \showarticletitle{{Max flows in $\bigO(nm)$ time, or better}}. In
  \bibinfo{booktitle}{{\em STOC}}. \bibinfo{pages}{765--774}.
\newblock


\bibitem[\protect\citeauthoryear{Rozenshtein, Anagnostopoulos, Gionis, and
  Tatti}{Rozenshtein et~al\mbox{.}}{2014}]%
        {rozenshtein2014event}
\bibfield{author}{\bibinfo{person}{Polina Rozenshtein}, \bibinfo{person}{Aris
  Anagnostopoulos}, \bibinfo{person}{Aristides Gionis}, {and}
  \bibinfo{person}{Nikolaj Tatti}.} \bibinfo{year}{2014}\natexlab{}.
\newblock \showarticletitle{Event detection in activity networks}. In
  \bibinfo{booktitle}{{\em SIGKDD}}. \bibinfo{pages}{1176--1185}.
\newblock


\bibitem[\protect\citeauthoryear{Sar{\'\i}y{\"u}ce, Gedik, Jacques-Silva, Wu,
  and {\c{C}}ataly{\"u}rek}{Sar{\'\i}y{\"u}ce et~al\mbox{.}}{2013}]%
        {sariyuce2013streaming}
\bibfield{author}{\bibinfo{person}{Ahmet~Erdem Sar{\'\i}y{\"u}ce},
  \bibinfo{person}{Bu{\u{g}}ra Gedik}, \bibinfo{person}{Gabriela
  Jacques-Silva}, \bibinfo{person}{Kun-Lung Wu}, {and}
  \bibinfo{person}{{\"U}mit~V {\c{C}}ataly{\"u}rek}.}
  \bibinfo{year}{2013}\natexlab{}.
\newblock \showarticletitle{Streaming algorithms for k-core decomposition}.
\newblock \bibinfo{journal}{{\em VLDB\/}} \bibinfo{volume}{6},
  \bibinfo{number}{6} (\bibinfo{year}{2013}), \bibinfo{pages}{433--444}.
\newblock


\bibitem[\protect\citeauthoryear{Sozio and Gionis}{Sozio and Gionis}{2010}]%
        {sozio2010community}
\bibfield{author}{\bibinfo{person}{Mauro Sozio} {and}
  \bibinfo{person}{Aristides Gionis}.} \bibinfo{year}{2010}\natexlab{}.
\newblock \showarticletitle{The community-search problem and how to plan a
  successful cocktail party}. In \bibinfo{booktitle}{{\em KDD}}. ACM,
  \bibinfo{pages}{939--948}.
\newblock


\bibitem[\protect\citeauthoryear{Stanton and Kliot}{Stanton and Kliot}{2012}]%
        {stanton2012streaming}
\bibfield{author}{\bibinfo{person}{Isabelle Stanton} {and}
  \bibinfo{person}{Gabriel Kliot}.} \bibinfo{year}{2012}\natexlab{}.
\newblock \showarticletitle{Streaming graph partitioning for large distributed
  graphs}. In \bibinfo{booktitle}{{\em KDD}}. ACM, \bibinfo{pages}{1222--1230}.
\newblock


\bibitem[\protect\citeauthoryear{Tatti and Gionis}{Tatti and Gionis}{2015}]%
        {tatti2015density}
\bibfield{author}{\bibinfo{person}{Nikolaj Tatti} {and}
  \bibinfo{person}{Aristides Gionis}.} \bibinfo{year}{2015}\natexlab{}.
\newblock \showarticletitle{Density-friendly graph decomposition}. In
  \bibinfo{booktitle}{{\em WWW}}. ACM, \bibinfo{pages}{1089--1099}.
\newblock


\bibitem[\protect\citeauthoryear{Tsourakakis, Bonchi, Gionis, Gullo, and
  Tsiarli}{Tsourakakis et~al\mbox{.}}{2013}]%
        {tsourakakis2013denser}
\bibfield{author}{\bibinfo{person}{Charalampos Tsourakakis},
  \bibinfo{person}{Francesco Bonchi}, \bibinfo{person}{Aristides Gionis},
  \bibinfo{person}{Francesco Gullo}, {and} \bibinfo{person}{Maria Tsiarli}.}
  \bibinfo{year}{2013}\natexlab{}.
\newblock \showarticletitle{Denser than the densest subgraph: extracting
  optimal quasi-cliques with quality guarantees}. In \bibinfo{booktitle}{{\em
  KDD}}. ACM, \bibinfo{pages}{104--112}.
\newblock


\bibitem[\protect\citeauthoryear{Valari, Kontaki, and Papadopoulos}{Valari
  et~al\mbox{.}}{2012}]%
        {valari2012discovery}
\bibfield{author}{\bibinfo{person}{Elena Valari}, \bibinfo{person}{Maria
  Kontaki}, {and} \bibinfo{person}{Apostolos~N Papadopoulos}.}
  \bibinfo{year}{2012}\natexlab{}.
\newblock \showarticletitle{Discovery of top-k dense subgraphs in dynamic graph
  collections}. In \bibinfo{booktitle}{{\em SSDBM}}. \bibinfo{pages}{213--230}.
\newblock


\bibitem[\protect\citeauthoryear{Yang and Leskovec}{Yang and Leskovec}{2015}]%
        {yang2015defining}
\bibfield{author}{\bibinfo{person}{Jaewon Yang} {and} \bibinfo{person}{Jure
  Leskovec}.} \bibinfo{year}{2015}\natexlab{}.
\newblock \showarticletitle{Defining and evaluating network communities based
  on ground-truth}.
\newblock \bibinfo{journal}{{\em ICML\/}} \bibinfo{volume}{42},
  \bibinfo{number}{1} (\bibinfo{year}{2015}), \bibinfo{pages}{181--213}.
\newblock


\end{thebibliography}

\end{document}